\DeclareMathOperator*{\argmin}{arg\,min}
\definecolor{myred}{rgb}{0.8, 0.4, 0.4}
\definecolor{mygreen}{rgb}{0.5, 0.8, 0.65}
\begin{document}
% Title portion. Note the short title for running heads 
\title{Improved Metric Distortion for Deterministic Social Choice Rules}
\author{Kamesh Munagala}
\affiliation{%
  \institution{Computer Science, Duke University}}
\email{kamesh@cs.duke.edu}

\author{Kangning Wang}
\affiliation{%
  \institution{Computer Science, Duke University}}
\email{knwang@cs.duke.edu}

% !TEX root = main.tex
\begin{abstract}
In this paper, we study the metric distortion of deterministic social choice rules that choose a winning candidate from a set of candidates based on voter preferences. Voters and candidates are located in an underlying metric space. A voter has cost equal to her distance to the winning candidate. Ordinal social choice rules only have access to the ordinal preferences of the voters that are assumed to be consistent with the metric distances. Our goal is to design an ordinal social choice rule with minimum distortion, which is the worst-case ratio, over all consistent metrics, between the social cost of the rule and that of the optimal omniscient rule with knowledge of the underlying metric space. 

The distortion of the best deterministic social choice rule was known to be between $3$ and $5$. It had been conjectured that any rule that only looks at the weighted tournament graph on the candidates cannot have distortion better than $5$. In our paper, we disprove it by presenting a weighted tournament rule with distortion of $4.236$. We design this rule by generalizing the classic notion of uncovered sets, and further show that this class of rules cannot have distortion better than $4.236$. We then propose a new voting rule, via an alternative generalization of uncovered sets. We show that if a candidate satisfying the criterion of this voting rule exists, then choosing such a candidate yields a distortion bound of $3$, matching the lower bound. We present a combinatorial conjecture that implies distortion of $3$, and verify it for small numbers of candidates and voters by computer experiments. Using our framework, we also show that selecting any candidate guarantees distortion of at most $3$ when the weighted tournament graph is cyclically symmetric.

\end{abstract}

\maketitle

% !TEX root = main.tex
\section{Introduction}
Collective decision making and social choice are the cornerstones of modern democracy. Mathematically, a social choice rule is a function that maps ordinal preferences of voters for alternatives (or candidates) to a single winning alternative. One way to characterize social choice rules is by adopting the axiomatic approach. Exemplary works include \cite{gibbard1973manipulation} and \cite{satterthwaite1975strategy}, where several natural desirable properties of social choice rules are proposed, and many of them are shown to be incompatible with each other.

An alternative approach is to adopt the utilitarian view, where agents have latent cardinal preferences underlying the ordinal preferences~\cite{procaccia2006distortion,caragiannis2011voting,boutilier2015optimal}. The goal is to optimize certain social objectives, the most common being the sum of utilities of the voters. The term \emph{distortion} of a social choice rule refers to the worst-case ratio between the objective value that the rule achieves and the optimal one, where the worst case is over all cardinal preferences that are consistent with the given ordinal preferences. However, it has been established that many common rules have unbounded distortion~\cite{procaccia2006distortion} and even randomized social choice rules have distortion of at least $\Omega\left(\sqrt{n}\right)$, where $n$ is the number of candidates~\cite{boutilier2015optimal}.

These impossibility results necessitate imposing additional structure on the underlying cardinal preferences. One common approach is the spatial model, where voters and candidates are located in a certain space (\emph{e.g.} the Euclidean space) and the distance between a voter and a candidate is part of the cost function of the voter~\cite{davis1966some,enelow1984spatial,enelow1990advances,merrill1999unified,schofield2007spatial}. For instance, for the social issue of building a facility serving a neighborhood, it is natural to assume the cost of an individual is the distance between the her and the facility. As another example, consider a government in the process of making a public policy. If people are viewed to be on the left-right political spectrum, the cost of an individual can be modeled as the distance between her and the policy on the spectrum. In both cases, the voters and candidates are located in a metric space and the costs are the corresponding distances.

When the cardinal costs are assumed to be distances in an underlying metric space, it is possible to achieve constant distortion. The work of~\cite{anshelevich2018approximating} shows that selecting any candidate in the uncovered set~\cite{moulin1986choosing} --- for instance, the Copeland voting rule --- guarantees distortion of $5$. (Please refer to Section~\ref{sec:rules} for description of several common voting rules.) They also show a lower bound of $3$ for the distortion for any (deterministic) social choice rule. The exact value of the optimal distortion remains an open question --- neither the upper bound of $5$ nor the lower bound of $3$ has been improved. The work of~\cite{skowron2017social} shows that the well known voting schemes of using scoring rules, as well as the popular single transferable vote, have super-constant distortion. The work of~\cite{goel2017metric} shows the ranked pairs rule and the Schulze method, both of which work on the weighted tournament graph over candidates, have distortion of at least $5$ contrary to previous belief that they could improve the bound of $5$. The weighted tournament graph places a directed edge between two candidates with weight equal to the portion of voters who prefer the first candidate over the second. The work of~\cite{goel2017metric} then conjectured any social choice rule that only looks at the weighted tournament graph over candidates cannot beat $5$.

\subsection{Our Results}
Our first result in Section~\ref{sec:4.236} disproves the conjecture that weighted tournament rules cannot improve the distortion bound below $5$. We improve the distortion upper bound to $2 + \sqrt{5} \approx 4.236$ using a weighted tournament rule. We achieve the bound by generalizing the notion of uncovered sets to a class of weighted rules. We show $4.236$ is indeed the best distortion achievable by this new class of social choice rules. 

This still leaves a gap between the best lower bound of $3$ and the upper bound of $4.236$ for deterministic social choice rules.  In Section~\ref{sec:matching}, we propose another social choice rule based on bipartite matchings that generalizes uncovered sets in a different way. In Theorem~\ref{theorem:uncovered_set_3}, we show that if a candidate exists that satisfies our criterion, then choosing this candidate yields distortion of $3$. In Section~\ref{sec:conjecture}. we present a combinatorial conjecture that implies such a candidate always exists, and verify this conjecture for small numbers of voters and candidates (at most $14$ in all). As another advantage of using this framework, we show a new result in Section~\ref{sec:symmetric} for certain types of preference profiles: If the weighted tournament graph on the instance is cyclically symmetric, then selecting any candidate yields distortion of at most $3$ for that profile.

\subsection{Related Work}
\paragraph{Distortion of randomized social choice rules}
In addition to deterministic social choices rules, randomized rules have been studied in the metric distortion setting. It is possible that they can achieve better distortion than deterministic ones: Randomly selecting a voter to be the dictator achieves distortion of $3$~\cite{feldman2016voting,anshelevich2017randomized}. On the other hand, no randomized rule can get distortion better than $2$~\cite{feldman2016voting,anshelevich2017randomized}, no truthful randomized rule can beat $3$~\cite{feldman2016voting} and no weighted tournament rule can beat $3$~\cite{goel2017metric}. An almost distortion-optimal rule when the number of candidates is not large is shown in~\cite{gross17vote}. Randomized rules are unsatisfying for practical implementation and interpretability, and much literature on social choice has therefore focused on deterministic voting rules.

\paragraph{Median distortion and fairness properties}
In our paper, we define the social cost to be the sum of the cost of each individual voter. There are other measures of how well a social choice rule performs. The median distortion setting, where distortion is defined using the median cost instead of the sum is considered in~\cite{anshelevich2018approximating}. They show the Copeland rule achieves distortion of $5$ in the setting and is optimal. Fairness properties of social choice rules generalizes distortion in the sense that for any $k$, it considers the sum of the $k$ largest costs~\cite{goel2017metric}. For this setting, the Copeland rule achieves a fairness ratio of $5$. Further, fairness ratio and distortion can only differ by at most $2$ additively~\cite{goel2018relating}. We show in Section~\ref{sec:lower_bound} that though our voting rule improves distortion for the sum of cost, it does not improve distortion bounds for fairness properties.

\paragraph{Other Problems in the Metric Distortion Setting}
The metric distortion framework is powerful enough to be applied to settings beyond that considered in this work. When candidates are independently drawn from the population of voters, the distortion can be better than $2$~\cite{cheng2017people} and the class of scoring rules that have constant expected distortion has a clean characterization~\cite{cheng2018distortion}.
%When each pair of voters can negotiate, sequential deliberation achieves distortion of $1.208$ in median graphs and $3$ in general metric spaces~\cite{fain2017sequential}.
Low-distortion algorithms for the maximum bipartite matching problem in metric spaces are proposed in \cite{anshelevich2017tradeoffs}. Social choice and facility location problems in the distortion framework are studied in~\cite{anshelevich2018ordinal}.

% !TEX root = main.tex
\section{Preliminaries}
\subsection{Social Choice Rules}
Let $\mathcal{C}= \{1, 2, \ldots, n\}$ be a set of $n$ candidates and $\mathcal{V} = \{1', 2', \ldots, m'\}$ be a set of $m$ voters. Let $\mathcal{A} = \mathcal{C} \cup \mathcal{V}$ be the set of entities in the voting system. For each voter $v \in \mathcal{V}$, she has a strict preference ordering $\sigma_v$ on the candidates in $\mathcal{C}$. We call $\bm{\sigma} = \{\sigma_v\}_{v \in \mathcal{V}}$ the voting profile. If $v \in \mathcal{V}$ prefers $c_1 \in \mathcal{C}$ to $c_2 \in \mathcal{C}$, we write $c_1 \succ_v c_2$ or $c_2 \prec_v c_1$. Let $\mathcal{O}_k$ be the family of strict orderings on $k$ elements. A \emph{social choice rule} is a function
$$f : \bigcup_{n, m = 1}^\infty \left(\mathcal{O}_n\right)^m \to \mathbb{N}$$
that takes $m$ orderings on $n$ candidates and outputs one of the $n$ candidates. The output of $f$ is the winner of the social choice rule.

Throughout the paper, we reserve upper-case letters $A, B, C, D, X, Y, Z$ to denote candidates. $AB$ is the set of voters that prefer $A$ to $B$. $ABC$ is the set of voters that both prefer $A$ to $B$ and prefer $B$ to $C$. For any set $S$, we use $|S|$ to denote the size of $S$. \emph{E.g.}, $|AX|$ is the number of voters who prefer $A$ to $X$.

We will need the following definition later.
\begin{definition}
For any voter $v \in \mathcal{V}$ and any candidate $X \in \mathcal{C}$, $P_v(X) := \{X\} \cup \{Y \in \mathcal{C} \mid Y \succ_v X\}$ is the set of candidates that $v$ likes at least as much as $X$. Similarly, $Q_v(X) := \{X\} \cup \{Y \in \mathcal{C} \mid Y \prec_v X\}$ is the set of candidates that $v$ likes at most as much as $X$.
\label{def:PQ}
\end{definition}

\subsection{Metric Distortion}
We assume the candidates and the voters are located in the same metric space $(\mathcal{A}, d)$, where $d : \mathcal{A} \times \mathcal{A} \to \mathbb{R}_{\geq 0}$ is the metric (distance function). As a metric, $d$ satisfies the following conditions:
\begin{enumerate}
\item Positive definiteness: $\forall i, j \in \mathcal{A}, d(i, j) \geq 0 \text{ and } d(i, j) = 0 \iff i = j$.
\item Symmetry: $\forall i, j \in \mathcal{A}, d(i, j) = d(j, i)$.
\item Triangle inequality: $\forall i, j, k \in \mathcal{A}, d(i, j) + d(j, k) \geq d(i, k)$.
\end{enumerate}

We say a metric $d$ is \emph{consistent} with a voting profile $\bm{\sigma}$ if
\[
\forall v \in \mathcal{V}, \forall A, B \in \mathcal{C}: A \succ_v B \implies d(A, v) \leq d(B, v).
\]
Here we allow a voter to have equal distances to different candidates. This is only to eliminate the need of infinitesimal distances in later analysis. Similar results can be derived if we require $d(A, v)$ and $d(B, v)$ to be different in the above definition.

We define the \emph{social cost} of a candidate $X$ with respective to a metric $d$ as:
\[
S(X, d) = \sum_{v \in \mathcal{V}} d(v, X)
\]
and the \emph{distortion} of a social choice rule $f$ as:
\[
\Delta(f) = \sup_{\bm{\sigma}} \sup_{d \text{ consistent with } {\bm{\sigma}}} \frac{S(f(\bm{\sigma}), d)}{\min_{A \in \mathcal{C}}S(A, d)}.
\]

\subsection{Weighted Tournament Graphs}
We can construct a tournament graph on the candidates by looking at their pairwise relationships: Let $\mathcal{C}$ be the set of vertices. For each pair of $A, B \in \mathcal{C}$ ($A \neq B$), draw an edge from $A$ to $B$ if $A$ pairwise beats $B$: $|AB| = |\{v \in \mathcal{V} : A \succ_v B\}| \geq \frac{m}{2}$ (break ties arbitrarily). The result is a directed graph where there is exactly one directed edge between any pair of vertices. If a social choice rule makes decision only based on the tournament graph, it is called a \emph{tournament rule}, or a $C1$ rule~\cite{fishburn1977condorcet}.

One way to generalize the class of tournament rules is to further consider the margin of each winning. For each pair of $A, B \in \mathcal{C}$ ($A \neq B$), draw an edge from $A$ to $B$ with weight $\frac{|AB|}{m}$. In the resulted \emph{weighted tournament graph}, there are two directed edges between any pair of vertices. A social choice rule making decision only based on the weighted tournament graph is called a \emph{weighted tournament rule}, or a $C2$ rule\footnote{To be precise, in \cite{fishburn1977condorcet}, a social choice rule is $C2$ only if it is not $C1$.}~\cite{fishburn1977condorcet}. We also consider a subclass of weighted tournament graphs that satisfy \emph{cyclic symmetry}.

\begin{definition}
We say a graph is \emph{cyclically symmetric}, if there exists a cyclic permutation $\tau$ on the vertices, so that any edge $(u, v)$ has the same weight as $(\tau(u), \tau(v))$.
\end{definition}
\begin{example}
Let $\mathcal{C} = \{A, B, C, D, E\}$. Figure~\ref{fig:symmetric} shows a cyclically symmetric weighted tournament graph. For each pair of vertices, only one edge is shown for visibility. The cyclic permutation $\tau$ is
\begin{equation*} 
 \tau = \begin{pmatrix}
  A & B & C & D & E \\
  B & C & D & E & A
 \end{pmatrix}.
\end{equation*}
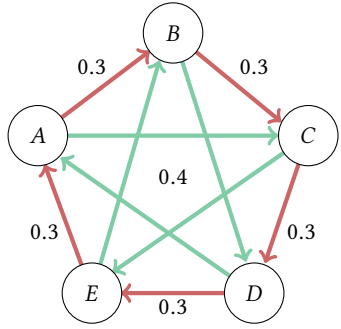
\begin{figure}
\centering
\begin{tikzpicture}[scale=1.8]
\draw [myred,ultra thick,->] (0, 0) to (0.83, 0.62);
\draw [myred,ultra thick,->] (1, 0.75) to (1.8, 0.1);
\draw [myred,ultra thick,->] (2, 0) to (1.68, -0.93);
\draw [myred,ultra thick,->] (1.6, -1.15) to (0.61, -1.15);
\draw [myred,ultra thick,->] (0.4, -1.15) to (0.05, -0.21);

\draw [mygreen,ultra thick,->] (0, 0) to (1.78, 0);
\draw [mygreen,ultra thick,->] (1, 0.75) to (1.54, -0.92);
\draw [mygreen,ultra thick,->] (2, 0) to (0.55, -1.00);
\draw [mygreen,ultra thick,->] (1.6, -1.15) to (0.17, -0.16);
\draw [mygreen,ultra thick,->] (0.4, -1.15) to (0.9, 0.55);

\draw [fill=white](0,0) circle [radius=0.22];
\draw [fill=white](1,0.75) circle [radius=0.22];
\draw [fill=white](2,0) circle [radius=0.22];
\draw [fill=white](1.6,-1.15) circle [radius=0.22];
\draw [fill=white](0.4,-1.15) circle [radius=0.22];

\node at (0,0) {\small $A$};
\node at (1,0.75) {\small $B$};
\node at (2,0) {\small $C$};
\node at (1.6,-1.15) {\small $D$};
\node at (0.4,-1.15) {\small $E$};

\node at (0.4,0.5) {\small $0.3$};
\node at (1.6,0.5) {\small $0.3$};
\node at (1.95,-0.7) {\small $0.3$};
\node at (0.05,-0.7) {\small $0.3$};
\node at (1,-1.25) {\small $0.3$};

\node at (1,-0.3) {\small $0.4$};

\end{tikzpicture}
\caption{A Cyclically Symmetric Weighted Tournament Graph}
\label{fig:symmetric}
\end{figure}
\end{example}
We note that cyclically symmetric weighted tournament graph is not necessarily induced by a cyclically symmetric voting profile.

\subsection{Examples of Social Choice Rules}
\label{sec:rules}
Now we present some social choice rules from existing literature, along with their distortion bounds.
\paragraph{\textsc{Uncovered}} The uncovered set of a tournament graph consists of those candidates $A \in \mathcal{C}$ such that for any other candidate $B$, either there is an edge from $A$ to $B$, or there is another candidate $C$ so that there are an edge from $A$ to $C$ and an edge from $C$ to $B$~\cite{moulin1986choosing}. The uncovered set is always nonempty for any tournament graph. \textsc{Uncovered} is a $C1$ rule that selects the alphabetically smallest candidate from the uncovered set. It has distortion of $5$~\cite{anshelevich2018approximating}.
\paragraph{\textsc{Copeland}} \textsc{Copeland} selects the candidate that has the most pairwise wins, \emph{i.e.}, the candidate with the largest out-degree in the tournament graph (break ties alphabetically). \textsc{Copeland} is a $C1$ rule as well, and it can be seen as \textsc{Uncovered} with a specific tie-breaking method. It does not improve the $5$-distortion of \textsc{Uncovered}~\cite{anshelevich2018approximating}.
\paragraph{\textsc{RankedPairs}} \textsc{RankedPairs} is a $C2$ rule proposed in \cite{tideman1987independence}. In \textsc{RankedPairs}, the edges of the weighted tournament graph are sorted to have decreasing weights. We start from an empty graph and add one edge at a time (in the order of decreasing weights) if it does not create a cycle. A directed acyclic graph (DAG) is resulted in the end, and the source of the DAG is selected as the winning candidate. It achieves distortion of $3$ when the tournament graph has a circumference (size of the largest cycle) of at most $4$~\cite{anshelevich2018approximating}. However, its distortion is at least $5$ in general~\cite{goel2017metric}.
\paragraph{\textsc{Schulze}} \textsc{Schulze} is a $C2$ rule proposed in \cite{schulze2011new}. For two candidates $A$ and $B$ in the weighted tournament graph, define $p(A, B)$ to be the maximum $p$ so that there is a path from $A$ to $B$ with each edge having weight of at least $p$. It can be shown that there is a candidate $X \in \mathcal{C}$ so that $p(X, Y) \geq p(Y, X)$ for any other $Y \in \mathcal{C}$. This $X$ is selected as the winning candidate by \textsc{Schulze}. Its distortion is at least $5$~\cite{goel2017metric}.
\paragraph{\textsc{OptimalLP}} In \cite{goel2017metric}, the authors show that the optimal distortion can be achieved by \textsc{OptimalLP} involving solving linear programs. The distortion of selecting candidate $A$ comparing to selecting $B$ in an instance is given by the following linear program:
\begin{equation*}
\onehalfspacing
\begin{array}{lll@{}ll}
P(A, B, \bm{\sigma})  := &\text{maximize}   &\displaystyle \sum_{v \in \mathcal{V}} d(A, v) \\[\bigskipamount]
&\text{subject to}& \displaystyle \sum_{v \in \mathcal{V}}   d(B, v) = 1,\\
                 &                               &d \text{ is consistent with } \bm{\sigma}.
\end{array}
\end{equation*}
\textsc{OptimalLP} selects the candidate $\argmin_{A \in \mathcal{C}} \max_{B \in \mathcal{C}} P(A, B, \bm{\sigma})$. Its distortion was known to be between $3$ and $5$ (inclusive). Our work implies \textsc{OptimalLP} has distortion of at most $2 + \sqrt{5}$.

% !TEX root = main.tex
\section{Weighted-Uncovered-Set Voting Rule}
\label{sec:4.236}
In this section, we present the \textsc{WeightedUncovered} rule that has distortion of $2 + \sqrt{5} \approx 4.236$ (Theorem~\ref{theorem:4.236}), improving the best existing upper bound of $5$ for any deterministic social choice rule. \textsc{WeightedUncovered} is based on the \emph{$\lambda$-weighted uncovered set} (Definition~\ref{def:WUS}), a generalization of the uncovered set~\cite{moulin1986choosing}. We show our analysis is tight by giving a matching lower bound (Theorem~\ref{theorem:lower}). We also give a lower bound for the fairness ratio (Theorem~\ref{theorem:fairness_lower}).

We start by introducing \emph{$\lambda$-weighted uncovered sets}.
\begin{definition}
Let $\lambda \in [0.5, 1]$ be a constant. The \emph{$\lambda$-weighted uncovered set} is the set of candidates $A \in \mathcal{C}$ such that for any candidate $B \neq A$, we have:
\begin{itemize}
\item either $|AB| \geq (1 - \lambda) m$,
\item or there is another node $C \notin \{A, B\}$ so that $|AC| \geq (1 - \lambda) m$ and $|CB| \geq \lambda m$.
\end{itemize}
\label{def:WUS}
\end{definition}

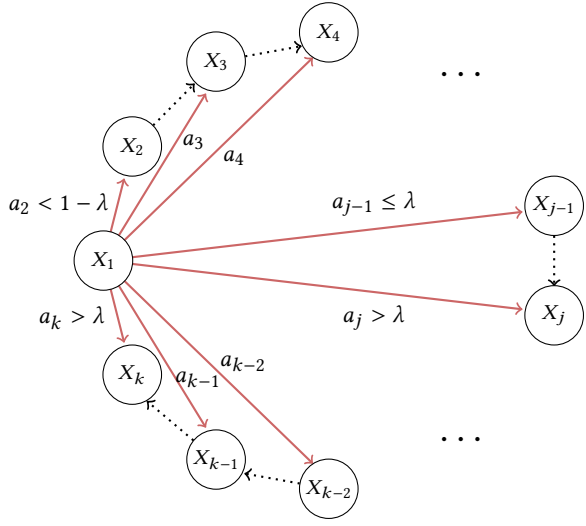
\begin{figure}[H]
\centering
\begin{tikzpicture}[scale=0.3]
\draw [thick,myred,->] (0, 0) to (1.28*0.72,5*0.72);
\draw [thick,myred,->] (0, 0) to (5*0.91,8.72*0.84);
\draw [thick,myred,->] (0, 0) to (10*0.93,10*0.89);
\draw [thick,myred,->] (0, 0) to (20*0.93,2.4*0.89);
\draw [thick,myred,->] (0, 0) to (1.28*0.72,-5*0.72);
\draw [thick,myred,->] (0, 0) to (5*0.91,-8.72*0.84);
\draw [thick,myred,->] (0, 0) to (10*0.93,-10*0.89);
\draw [thick,myred,->] (0, 0) to (20*0.93,-2.4*0.89);

\draw [thick,dotted,->] (1.28,5) to (5*0.81,8.72*0.89);
\draw [thick,dotted,->] (5,8.72) to (10*0.88,10*0.94);
\draw [thick,dotted,->] (5,-8.72) to (1.28*1.5,-5*1.23);
\draw [thick,dotted,->] (10,-10) to (5*1.25,-8.72*1.07);
\draw [thick,dotted,->] (20,2.4) to (20*1,-2.4*0.47);

\draw [fill=white](0,0) circle [radius=1.3];
\draw [fill=white](1.28,5) circle [radius=1.3];
\draw [fill=white](5,8.72) circle [radius=1.3];
\draw [fill=white](10,10) circle [radius=1.3];
\draw [fill=white](1.28,-5) circle [radius=1.3];
\draw [fill=white](5,-8.72) circle [radius=1.3];
\draw [fill=white](10,-10) circle [radius=1.3];
\draw [fill=white](20,2.4) circle [radius=1.3];
\draw [fill=white](20,-2.4) circle [radius=1.3];

\node at (0,0) {\footnotesize $X_1$};
\node at (1.28,5) {\footnotesize $X_2$};
\node at (5,8.72) {\footnotesize $X_3$};
\node at (10,10) {\footnotesize $X_4$};
\node at (1.28,-5) {\footnotesize $X_{k}$};
\node at (5,-8.72) {\footnotesize $X_{k-1}$};
\node at (10,-10) {\footnotesize $X_{k-2}$};
\node at (20,2.4) {\footnotesize $X_{j-1}$};
\node at (20,-2.4) {\footnotesize $X_{j}$};

\node at (16,8) {\LARGE $\cdots$};
\node at (16,-8) {\LARGE $\cdots$};

\node at (-2,2.5) {\small $a_2 < 1 - \lambda$};
\node at (-1.4,-2.5) {\small $a_k > \lambda$};

\node at (4,5.2) {\small $a_3$};
\node at (4.2,-5.2) {\small $a_{k-1}$};
\node at (5.8,4.5) {\small $a_4$};
\node at (6.2,-4.5) {\small $a_{k-2}$};

\node at (12,2.5) {\small $a_{j-1} \leq \lambda$};
\node at (12,-2.5) {\small $a_{j} > \lambda$};
\end{tikzpicture}
\caption{The Cycle in Lemma~\ref{lem:weighted_uncovered_nonempty}}
\label{fig:weighted_uncovered_nonempty}
\end{figure}

Now we present our key lemma showing the $\lambda$-weighted uncovered set is a meaningful concept.
\begin{lemma}
For any $\lambda \in [0.5, 1]$, the $\lambda$-weighted uncovered set is nonempty.
\label{lem:weighted_uncovered_nonempty}
\end{lemma}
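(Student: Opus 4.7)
My plan is to give a short Copeland-style extremal argument that avoids the explicit cycle construction suggested by Figure~\ref{fig:weighted_uncovered_nonempty}. For each candidate $A$, define its $(1-\lambda)$-weak out-neighborhood
$$N(A) := \{C \in \mathcal{C} \setminus \{A\} : |AC| \geq (1-\lambda) m\},$$
pick $A^* \in \mathcal{C}$ maximizing $|N(A^*)|$, and argue by contradiction that $A^*$ must lie in the $\lambda$-weighted uncovered set.

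The steps, in order, would be as follows. First, assume for contradiction that $A^*$ is uncovered-set-free: then there exists a witness $B \neq A^*$ with $|A^* B| < (1-\lambda)m$ such that for every $C \notin \{A^*, B\}$, either $|A^* C| < (1-\lambda)m$ or $|CB| < \lambda m$. Second, use the identity $|A^*B| + |BA^*| = m$ together with the hypothesis $\lambda \geq 1/2$ to conclude $|BA^*| > \lambda m \geq (1-\lambda)m$, so that $A^* \in N(B)$; note also that $B \notin N(A^*)$. Third, show $N(A^*) \subseteq N(B)$: for any $C \in N(A^*)$ we have $C \neq B$ (since $B \notin N(A^*)$), so the witness condition applies, and because $|A^* C| \geq (1-\lambda)m$ forces the second branch, we get $|CB| < \lambda m$, hence $|BC| > (1-\lambda)m$ and $C \in N(B)$. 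Fourth, combine to get $|N(B)| \geq |N(A^*)| + 1$, contradicting the maximality of $A^*$.

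The step I expect to be the main obstacle is the second one: this is the only place where the assumption $\lambda \geq 1/2$ is genuinely used, and it is exactly what lets me upgrade the strict margin inequality $|BA^*| > \lambda m$ into membership $A^* \in N(B)$ with threshold $1-\lambda$. The rest is bookkeeping with $|XY| + |YX| = m$. As a backup in case this extremal approach hits an unforeseen edge case (for example at the boundary $\lambda = 1/2$ with ties), I would fall back on the cycle approach the figure suggests: iteratively build $Y_1, Y_2 = \phi(Y_1), Y_3 = \phi(Y_2), \ldots$, where $\phi(A)$ denotes a witness against $A$, and prove inductively that $|Y_i Y_1| > \lambda m$ for every $i$ in the eventual cycle; closing the cycle would then give $|Y_l Y_1| > \lambda m$ and $|Y_1 Y_l| > \lambda m$ simultaneously, which together sum to more than $2\lambda m \geq m$, contradicting $|Y_l Y_1| + |Y_1 Y_l| = m$.
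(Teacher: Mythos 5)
Your primary argument is correct, and it takes a genuinely different route from the paper. The paper proves nonemptiness by contradiction via a cycle: it follows witnesses to build $X_1,\ldots,X_k,X_{k+1}=X_1$, sets $a_j = |X_1X_j|/m$, observes $a_k > \lambda$ and $a_2 < 1-\lambda$, and locates a threshold-crossing index $j$ with $a_j > \lambda$ and $a_{j-1} \leq \lambda$, at which point $C = X_1$ witnesses the second condition for the pair $(X_{j-1}, X_j)$, a contradiction. Your extremal argument instead adapts the classical proof that a Copeland winner lies in the uncovered set: maximize $|N(\cdot)|$ with $N(A) = \{C \neq A : |AC| \geq (1-\lambda)m\}$, and show a witness $B$ against the maximizer $A^*$ would satisfy $N(B) \supseteq N(A^*) \cup \{A^*\}$. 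Each step checks out: $|A^*B| < (1-\lambda)m$ gives $|BA^*| > \lambda m \geq (1-\lambda)m$ (this is where $\lambda \geq 1/2$ enters), so $A^* \in N(B)$ and $B \notin N(A^*)$; and for $C \in N(A^*)$ the negated second condition forces $|CB| < \lambda m$, hence $|BC| > (1-\lambda)m$ and $C \in N(B)$. Beyond brevity, your approach buys something the paper's purely existential cycle argument does not: it exhibits a concrete member of the $\lambda$-weighted uncovered set, namely any candidate maximizing the number of $(1-\lambda)$-threshold wins --- a weighted analogue of the inclusion of Copeland winners in the uncovered set, and hence a simple constructive selection rule.

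One caveat on your fallback sketch, which you fortunately never need: the proposed invariant $|Y_iY_1| > \lambda m$ does not propagate. From the witness condition for $(Y_i, Y_{i+1})$ with $C = Y_1$, knowing $|Y_iY_1| \geq (1-\lambda)m$ only yields $|Y_1Y_{i+1}| < \lambda m$, i.e.\ $|Y_{i+1}Y_1| > (1-\lambda)m$, which is strictly weaker than $> \lambda m$ once $\lambda > 1/2$. The weaker invariant $|Y_iY_1| > (1-\lambda)m$ does propagate, and the contradiction should then come from the cycle-closing edge, where the violated first condition gives $|Y_lY_1| < (1-\lambda)m$ --- or from the paper's descent argument, which avoids maintaining any invariant along the whole cycle. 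So if you ever write up the backup, repair it along these lines; as submitted, your main proof stands on its own.
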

\begin{proof}
We prove by contradiction. Suppose there is no such node $A$. Then there must be nodes $X_1, X_2, \ldots, X_k, X_{k+1} = X_1$ in the graph. If we set $A = X_i$ and $B = X_{i+1}$ ($i = 1, 2, \ldots, k$), then both of the conditions in the lemma are violated. Figure~\ref{fig:weighted_uncovered_nonempty} shows such a cycle.

Let $a_j = \frac{|X_1X_j|}{m}$. We have $a_k = \frac{|X_1X_k|}{m} > \lambda$ and $a_2 = \frac{|X_1X_2|}{m} < 1 - \lambda$. When $j$ goes from $k$ down to $2$, $a_j$ has to drop below $\lambda$ at some time, so we have $a_j > \lambda$ while $a_{j-1} \leq \lambda$. Thus we have $|X_1X_j| > \lambda m$ and $|X_{j-1}X_1| \geq (1 - \lambda) m$. This contradicts with the assumption that when $A = X_{j-1}$ and $B = X_j$, there is no such $C = X_1$ satisfying the second condition in the lemma.
\end{proof}

Throughout this section, we use $\varphi$ to denote the golden ratio $\frac{\sqrt{5} - 1}{2} \approx 0.618$.
\begin{definition}
\textsc{WeightedUncovered} is a social choice rule that picks the alphabetically smallest candidate in the $\varphi$-weighted uncovered set.
\end{definition}

Because of Lemma~\ref{lem:weighted_uncovered_nonempty}, \textsc{WeightedUncovered} is well-defined and always picks a candidate in the $\varphi$-weighted uncovered set. We state the main property of \textsc{WeightedUncovered} below in Theorem~\ref{theorem:4.236} and prove it later.
\begin{theorem}
\textsc{WeightedUncovered} has distortion of at most $\left(2 + \sqrt{5}\right) \approx 4.236$.
\label{theorem:4.236}
\end{theorem}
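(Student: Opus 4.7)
The plan is to show $S(A) \leq (2+\sqrt{5})\,S(B)$ for every candidate $B \neq A$, where $A$ denotes the candidate chosen by \textsc{WeightedUncovered}; taking $B$ to be the social-cost optimum then gives the claimed distortion bound. By definition $A$ lies in the $\varphi$-weighted uncovered set, so one of two cases holds for each $B$: (i) $|AB| \geq (1-\varphi)m$, or (ii) there exists $C \notin \{A,B\}$ with $|AC| \geq (1-\varphi)m$ and $|CB| \geq \varphi m$. The central tool is the standard triangle-inequality trick: if $W$ is a set of voters all preferring $X$ to $Y$, then $|W| \cdot d(X,Y) \leq 2\,S(Y,d)$, obtained by summing $d(X,Y) \leq d(X,v) + d(v,Y) \leq 2\,d(v,Y)$ over $v \in W$.

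In Case (i), applying the trick with $W = AB$ gives $d(A,B) \leq \frac{2\,S(B,d)}{(1-\varphi)m}$, and combining with the decomposition $S(A,d) \leq \sum_{v \in AB} d(v,B) + \sum_{v \in BA}[d(v,B) + d(A,B)] = S(B,d) + |BA|\cdot d(A,B)$ and $|BA| \leq \varphi m$ yields $\frac{S(A,d)}{S(B,d)} \leq 1 + \frac{2\varphi}{1-\varphi} = \frac{1+\varphi}{1-\varphi}$. The golden-ratio identity $\varphi^2 + \varphi = 1$ simplifies this to exactly $2 + \sqrt{5}$. The choice $\lambda = \varphi$ is motivated precisely by this identity: any other $\lambda \in [0.5,1]$ would give a direct-case ratio $\frac{1+\lambda}{1-\lambda}$ mismatched with the indirect-case bound, producing a strictly larger worst-case distortion.

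For Case (ii), I would partition voters by their strict ordering of $\{A,B,C\}$ and bound $d(v,A) - d(v,B)$ type-by-type using preference constraints and the triangle inequality. Orderings in which $A$ precedes $B$ contribute at most $0$; the remaining three orderings ($B \succ A \succ C$, $B \succ C \succ A$, $C \succ B \succ A$) contribute at most $d(B,C)$, $d(A,C) + d(B,C)$, and $d(A,C)$ respectively. Summing gives $S(A,d) - S(B,d) \leq \kappa_1\,d(B,C) + \kappa_2\,d(A,C)$ with $\kappa_1 \leq |BC| \leq (1-\varphi)m$ and $\kappa_2 \leq |CA| \leq \varphi m$. I would then use the trick to bound $d(B,C) \leq \frac{2\,S(B,d)}{\varphi m}$ from $|CB| \geq \varphi m$ and $d(A,C) \leq \frac{2\,S(C,d)}{(1-\varphi)m}$ from $|AC| \geq (1-\varphi)m$, with $S(C,d) \leq S(B,d) + |BC|\cdot d(B,C)$.

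The main obstacle is Case (ii): naively composing these inequalities yields only $S(A,d) \leq (5 + 2\sqrt{5})\,S(B,d)$, far from the target $2 + \sqrt{5}$. Closing the gap will require an argument that does not cascade the $A$-vs-$C$ and $C$-vs-$B$ estimates sequentially but instead combines the per-voter-type decomposition with the distance bounds in a single step, exploiting the self-similarity $\varphi^2 = 1-\varphi$ to extract the required cancellation. Verifying that the combinatorial coefficients $\kappa_1$ and $\kappa_2$ can be jointly controlled so as to recover the clean ratio $\frac{1+\varphi}{1-\varphi} = 2+\sqrt{5}$ seen in Case (i) is where I expect the bulk of the technical work to lie.
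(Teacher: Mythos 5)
Your Case (i) is correct and is essentially the paper's Lemma~\ref{lem:one_edge} (your triangle-inequality trick with $W = AB$ plus $S(A,d) \leq S(B,d) + |BA|\,d(A,B)$ rearranges to the paper's computation, and the simplification to $2+\sqrt{5}$ via $\varphi^2 = 1-\varphi$ is right). But Case (ii) is where the theorem actually lives, and there your proposal stops at the obstacle instead of resolving it: you correctly diagnose that cascading $d(A,C) \leq \frac{2\,S(C,d)}{(1-\varphi)m}$ through $S(C,d) \leq S(B,d) + |BC|\,d(B,C)$ only yields $5+2\sqrt{5}$, but the stated plan to ``combine the per-voter-type decomposition with the distance bounds in a single step'' is a hope, not an argument. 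As written, the central claim --- that $|AC| \geq (1-\varphi)m$ and $|CB| \geq \varphi m$ together force $S(A,d) \leq (2+\sqrt{5})\,S(B,d)$ (the paper's Lemma~\ref{lem:two_edges}) --- is unproven, so the proposal is incomplete at its decisive step.

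The missing idea, as the paper executes it, is to never route through $S(C,d)$ at all. Instead of bounding $S(A,d)-S(B,d)$ from above and then estimating $d(A,C)$ via $C$'s social cost, the paper lower-bounds $S(B,d)$ directly against per-voter charges and invokes the charging lemma of Anshelevich and Postl (Lemma~\ref{lem:lem_from_1}): if $\sum_{v} d(v,B) \geq \frac{1}{\gamma}\sum_{v \in BA} \min_{C' \in Q_v(A)} d(B,C')$, then $S(A,d) \leq (1+\gamma)\,S(B,d)$. The charge for $v \in BA$ is the distance from $B$ to the cheapest candidate that $v$ ranks no higher than $A$; in particular, voters in $BAC$ are charged only $d(B,C)$, never $d(A,B)$, and no bound on $d(A,C)$ is ever needed --- this is exactly what kills the cascading loss you ran into. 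The proof then splits on whether $d(C,B) \geq d(A,B)$. If so, the $\varphi m$ voters in $CB$ alone give $S(B,d) \geq \frac{1}{2}|CB|\,d(C,B) \geq \frac{\varphi m}{2}\,d(A,B)$, and the lemma applies with $\gamma = 2/\varphi$, giving $1 + 2/\varphi = 2+\sqrt{5}$. If $d(C,B) < d(A,B)$, the key per-voter inequality is that each $v \in BAC$ satisfies $d(v,B) \geq \frac{1}{2}\bigl(d(A,B) - d(B,C)\bigr)$ (from $d(v,B)+d(B,C) \geq d(v,C) \geq d(v,A) \geq d(A,B) - d(v,B)$); combining this with the inclusions $AC \subseteq BAC \cup AB$ and $CB \subseteq CBA \cup AB$, and then using the self-similarity $\frac{1-\varphi}{\varphi} = \varphi$ together with $CBA \cup BCA \subseteq CA$ and $BA = CBA \cup BCA \cup BAC$, yields $S(B,d) \geq \frac{\varphi}{2}\bigl(\sum_{v \in CBA \cup BCA} d(A,B) + \sum_{v \in BAC} d(B,C)\bigr)$, which matches the charges and again gives $\gamma = 2/\varphi$. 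Separately, a complete proof of the theorem also needs nonemptiness of the $\varphi$-weighted uncovered set (Lemma~\ref{lem:weighted_uncovered_nonempty}) so that the rule's chosen $A$ exists; your write-up takes this for granted.
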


\subsection{Main Theorem: Improved Distortion Bound}
Recall that $S(X, d) = \sum_{v \in \mathcal{V}} d(v, X)$ is the social cost if $X$ is selected and the metric is $d$. First, if $|AB| \geq (1 - \varphi) m$, the social cost gap between choosing $A$ and $B$ can be easily bounded.
\begin{lemma}
If $|AB| \geq (1 - \varphi) m$, then $S(A, d) \leq \left(2 + \sqrt{5}\right)S(B, d)$ for any metric $d$.
\label{lem:one_edge}
\end{lemma}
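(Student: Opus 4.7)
The plan is to translate the pairwise preference $|AB| \geq (1-\varphi) m$ into a bound on $d(A,B)$ and then sum a voter-level triangle inequality. First I would use that the preferences are strict, so $|AB| + |BA| = m$, and split the sum defining $S(A,d)$ by preference. For a voter $u \in AB$ consistency directly gives $d(u, A) \leq d(u, B)$, while for $u \in BA$ the triangle inequality gives $d(u, A) \leq d(u, B) + d(A, B)$. Summing yields
\[
S(A, d) \leq S(B, d) + |BA| \cdot d(A, B).
\]

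Second, I would bound $d(A, B)$ using the voters in $AB$: for any $v \in AB$, triangle inequality and consistency give $d(A, B) \leq d(v, A) + d(v, B) \leq 2\, d(v, B)$. Averaging this inequality over all $v \in AB$ and extending the sum to all voters gives
\[
d(A, B) \leq \frac{2}{|AB|} \sum_{v \in AB} d(v, B) \leq \frac{2\, S(B, d)}{|AB|}.
\]
Combining the two bounds yields $S(A, d) \leq \left(1 + \tfrac{2|BA|}{|AB|}\right) S(B, d) = \left(\tfrac{2m}{|AB|} - 1\right) S(B, d)$, where I used $|AB| + |BA| = m$ in the last step.

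Finally, the coefficient $\tfrac{2m}{|AB|} - 1$ is decreasing in $|AB|$, so it is maximized at the hypothesis threshold $|AB| = (1 - \varphi) m$, giving $\tfrac{2}{1-\varphi} - 1 = \tfrac{1+\varphi}{1-\varphi}$. What remains is the algebraic identity $\tfrac{1+\varphi}{1-\varphi} = 2 + \sqrt{5}$, which follows from $\varphi = \tfrac{\sqrt{5}-1}{2}$ (equivalently $\varphi^2 = 1 - \varphi$) by rationalizing the denominator. There is no serious obstacle here: the lemma is essentially the classical "pairwise-win implies bounded distortion" argument (which gives ratio $3$ when the threshold is $m/2$), and the threshold $1 - \varphi$ is designed precisely so that the resulting ratio comes out to $2 + \sqrt{5}$.
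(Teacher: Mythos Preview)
Your proof is correct and uses essentially the same ingredients as the paper: the per-voter bounds $d(A,v)\le d(B,v)$ for $v\in AB$, $d(A,v)\le d(B,v)+d(A,B)$ for $v\in BA$, and $d(A,B)\le 2d(B,v)$ for $v\in AB$. The only difference is organizational---you first derive the generic ratio $\tfrac{2m}{|AB|}-1$ and then specialize to the threshold, whereas the paper directly shows $(2+\sqrt{5})S(B,d)-S(A,d)\ge 0$ in one chain---but the two arguments are equivalent.
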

\begin{proof}
We have
\begin{align*}
\left(2 + \sqrt{5}\right) S(B, d) - S(A, d) &= \sum_{v \in AB} \left(\left(2 + \sqrt{5}\right) d(B, v) - d(A, v)\right) + \sum_{v \in BA} \left(\left(2 + \sqrt{5}\right) d(B, v) - d(A, v)\right)\\
&\geq \sum_{v \in AB} \left(1 + \sqrt{5}\right) d(B, v) - \sum_{v \in BA} d(A, B)\\
&\geq \sum_{v \in AB} \frac{1 + \sqrt{5}}{2} d(A, B) - \sum_{v \in BA} d(A, B)\\
&= d(A, B) \cdot \left(\frac{3 + \sqrt{5}}{2} |AB| - m\right)\\
&\geq 0. \qedhere
\end{align*}
\end{proof}

When showing \textsc{Copeland} has distortion of at most $5$, \cite{anshelevich2018approximating} crucially uses the following lemma:
\begin{lemma}[Theorem 15 in~\cite{anshelevich2018approximating}]
If $|AC| \geq \frac{1}{2} \cdot m$ and $|CB| \geq \frac{1}{2} \cdot m$, then $S(A, d) \leq 5S(B, d)$ for any metric $d$.
\label{lem:half_half}
\end{lemma}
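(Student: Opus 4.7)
The plan is to show $S(A, d) \leq 5\, S(B, d)$ by a voter-by-voter analysis, with $C$ playing the role of an intermediary witness. The bound $5 = 1 + 4$ will arise as the trivial baseline $S(B, d)$ plus an excess of at most $4\, S(B, d)$, controlled by two amortized uses of the triangle inequality through $C$.

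I would first write $S(A, d) - S(B, d) = \sum_v (d(A, v) - d(B, v))$ and observe that every voter $v \in AB$ contributes non-positively since $d(A,v) \leq d(B,v)$. So only voters in $BA$ matter. I would then subdivide $BA$ by the position of $C$ in $v$'s ranking. For $v$ ranking $C \succ_v B \succ_v A$ (equivalently, $v \in BA \cap CB$), the consistency inequality $d(C,v) \leq d(B,v)$ combined with $d(A,v) \leq d(A,C) + d(C,v)$ gives excess at most $d(A,C)$. The remaining voters in $BA$ lie in $BA \cap BC$, for which I bound the excess by $d(A,B)$ directly. Crucially, $|BA \cap BC| \leq |BC| = m - |CB| \leq m/2$, so this second group is small, which is what allows the constants to balance.

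Next, I would bound $d(A,C)$, $d(B,C)$, and $d(A,B)$ in terms of $S(B, d)/m$ by averaging the triangle inequality over witness sets. For every $w \in AC$, $d(A,w) \leq d(C,w)$, hence $d(A,C) \leq 2\, d(C,w) \leq 2\, d(B,C) + 2\, d(B,w)$; summing over $w \in AC$ and using $|AC| \geq m/2$ controls $d(A,C)$ in terms of $d(B,C)$ and $S(B,d)/m$. A parallel calculation using witnesses in $CB$ bounds $d(B,C)$ by $\frac{4}{m} S(B,d)$, and $d(A,B) \leq d(A,C) + d(B,C)$ then closes the loop. Plugging the resulting distance bounds into the two voter-group contributions yields the claim.

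The main obstacle will be to make the constants line up to exactly $5$: a naive chaining of triangle inequalities through $C$ yields a bounded but loose constant (direct accounting easily produces $9$ or $17$). The sharpening to $5$ requires exploiting all three structural inequalities $|AC| \geq m/2$, $|CB| \geq m/2$, and $|BC| \leq m/2$ in concert, and avoiding double-counting of voter contributions across the different triangle inequalities. Alternatively, since this is exactly Theorem~15 of \cite{anshelevich2018approximating}, one may simply invoke its proof.
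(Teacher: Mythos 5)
There is a genuine gap, and you concede it yourself: your accounting does not reach the constant $5$, and the ``sharpening'' you defer is not an optimization of your scheme but a structurally different argument. Carried out as written, your witness-averaging bounds give $d(B,C) \leq \frac{4}{m}S(B,d)$, then $d(A,C) \leq 2d(B,C) + \frac{4}{m}S(B,d) \leq \frac{12}{m}S(B,d)$ and $d(A,B) \leq \frac{16}{m}S(B,d)$; since $|BA \cap CB|$ can be close to $m$ and $|BA \cap BC| \leq m/2$, the excess is bounded only by roughly $12\,S(B,d) + 8\,S(B,d)$, i.e., distortion around $21$. The loss is intrinsic to the approach: upper-bounding each candidate--candidate distance by a multiple of $S(B,d)/m$ costs a factor of $2$ at every averaging step and again at every use of $|AC|, |CB| \geq m/2$, and chaining through $C$ compounds these factors. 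Since the lemma is tight (as the paper notes just after stating it), no bookkeeping within this scheme can recover $5$.

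The missing idea is to reverse the direction of the estimates: lower-bound $S(B,d) = \sum_{v} d(v,B)$ per voter against the single quantity $\frac{m}{4}\,d(A,B)$, and then amortize via Lemma~\ref{lem:lem_from_1} (Lemma 14 of \cite{anshelevich2018approximating}, reproduced in the paper), which converts $\sum_{v} d(v,B) \geq \frac{1}{4}\sum_{v \in BA} \min_{C' \in Q_v(A)} d(B,C')$ into $S(A,d) \leq 5\,S(B,d)$; note that for $v \in BA$ one has $A \in Q_v(A)$, so the minimum is at most $d(A,B)$. One splits on whether $d(C,B) \geq d(A,B)$. If so, every $v \in CB$ satisfies $d(v,B) \geq \frac{1}{2}d(C,B) \geq \frac{1}{2}d(A,B)$ and $|CB| \geq \frac{m}{2}$ finishes. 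If not, partition the voters using $AB \cup BAC \cup CBA$, with $d(v,B) \geq \frac{1}{2}d(A,B)$ for $v \in AB$, $d(v,B) \geq \frac{1}{2}d(B,C)$ for $v \in CBA$, and --- the step with no analogue in your sketch --- $d(v,B) \geq \frac{1}{2}\left(d(A,B) - d(B,C)\right)$ for $v \in BAC$, from $d(v,B) + d(B,C) \geq d(v,C) \geq d(v,A) \geq d(A,B) - d(v,B)$; the coverings $AC \subseteq BAC \cup AB$ and $CB \subseteq CBA \cup AB$ then give $\sum_v d(v,B) \geq \frac{1}{2}\left(|AC|(d(A,B)-d(B,C)) + |CB|\,d(B,C)\right) \geq \frac{m}{4}d(A,B)$. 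This is exactly the argument the paper uses for its strictly stronger Lemma~\ref{lem:two_edges} (with thresholds $(1-\varphi)m$ and $\varphi m$ in place of $\frac{m}{2}$), and specializing it to $\frac{1}{2}$ proves the present lemma with constant $5$. Your final fallback of invoking Theorem 15 of \cite{anshelevich2018approximating} is legitimate --- the paper itself states this lemma by citation --- but it is a citation, not a proof, and the mathematical content you supply in its place does not establish the stated bound.
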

Lemma~\ref{lem:half_half} itself is tight, but worst-case analysis is deployed when we assume every edge in the unweighted tournament graph has weight of $0.5$. Actually, if every edge has weight approaching $0.5$, any social choice rule will have distortion approaching $3$ by a similar argument as Lemma~\ref{lem:one_edge}. Instead of having the stronger condition in the uncovered set that $|AB| \geq 0.5m$, we allow those candidate $A$'s with $|AB| \geq (1 - \lambda)m$ to be in the $\lambda$-weighted uncovered set. In this way, we can use unbalanced $|AC|$ and $|CB|$ and still guarantee the $\lambda$-weighted uncovered set is nonempty (Lemma~\ref{lem:weighted_uncovered_nonempty}). It turns out that better distortion can be achieved with carefully chosen parameters.
\begin{lemma}
If $|AC| \geq (1 - \varphi) m$ and $|CB| \geq \varphi m$, then $S(A, d) \leq \left(2 + \sqrt{5}\right)S(B, d)$ for any metric $d$.
\label{lem:two_edges}
\end{lemma}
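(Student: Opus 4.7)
The plan is to reduce the claim to a joint bound on $|CA|\cdot d(A,C) + |BC|\cdot d(B,C)$ and then exploit the two constraints $|AC|\geq(1-\varphi)m$ and $|CB|\geq\varphi m$ simultaneously; the golden ratio enters because it optimizes the trade-off between these two, extending the symmetric $|AC|,|CB|\geq m/2$ analysis of Lemma~\ref{lem:half_half}.

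For the reduction, I partition voters by their preferences between $\{A,C\}$ and $\{C,B\}$ and use the triangle inequality together with the preference-implied inequalities to bound $d(A,v)$ in each of the four cases:
\[
d(A,v)\;\leq\;
\begin{cases}
d(B,v) & v\in AC\cap CB,\\
d(B,v)+d(B,C) & v\in AC\cap BC,\\
d(B,v)+d(A,C) & v\in CA\cap CB,\\
d(B,v)+d(A,C)+d(B,C) & v\in CA\cap BC.
\end{cases}
\]
Summing these per-voter inequalities yields the aggregate bound
\[ S(A,d)\;\leq\;S(B,d)+|CA|\cdot d(A,C)+|BC|\cdot d(B,C), \]
so it suffices to prove $|CA|\cdot d(A,C)+|BC|\cdot d(B,C)\leq(1+\sqrt{5})\,S(B,d)$.

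To bound the two product terms, I use the standard ``factor-of-two'' inequalities obtained by combining preferences with the triangle inequality: from $v\in CB$ one gets $|CB|\cdot d(B,C)\leq 2\sum_{v\in CB} d(B,v)$, and from $v\in AC$ together with $d(C,v)\leq d(B,v)+d(B,C)$ one gets $|AC|\cdot d(A,C)\leq 2\sum_{v\in AC} d(B,v)+2|AC|\cdot d(B,C)$. Scaling the second by $|CA|/|AC|$ and substituting the first into it, both constraints enter through the ratios $|CA|/|AC|\leq\varphi/(1-\varphi)$ and $|BC|/|CB|\leq(1-\varphi)/\varphi$. The defining identity $\varphi^2=1-\varphi$ of the golden ratio makes these simplify to $1/\varphi=1+\varphi$ and $\varphi$ respectively, and the choice $\lambda=\varphi$ in Definition~\ref{def:WUS} is precisely the value that makes the coefficient on $\sum_{v\in AC} d(B,v)$ collapse to $2/\varphi=1+\sqrt{5}$. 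Tightness is achieved in the limit where voters split between orderings $BAC$ and $CBA$ in proportions $(1-\varphi):\varphi$, at which every inequality used becomes an equality.

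The hard part will be executing the combination tightly. A naive composition — first bounding $S(A,d)$ via $S(C,d)$ using only $|AC|$, then $S(C,d)$ via $S(B,d)$ using only $|CB|$ — yields $S(A,d)\leq(2+\sqrt{5})\sqrt{5}\,S(B,d)=(5+2\sqrt{5})\,S(B,d)$, a factor of $\sqrt{5}$ too loose. The improvement must come from substituting the bound on $d(B,C)$ \emph{inside} the bound on $d(A,C)$ before taking the worst case over $|AC|,|CB|,|CA|,|BC|$, so that both constraints act on the same intermediate quantities and the golden-ratio cancellation occurs. A delicate point is that while the coefficient on $\sum_{v\in AC} d(B,v)$ collapses cleanly to $1+\sqrt{5}$, the coefficient on $\sum_{v\in CB} d(B,v)$ appears larger at first; handling it requires using $|AC\cap BC|$ rather than the coarser $|BC|$ in the triangle step, since voters in $AC\cap CB$ already satisfy $d(C,v)\leq d(B,v)$ and incur no extra $d(B,C)$ cost. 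Verifying that this finer accounting brings the bound down to exactly $(1+\sqrt{5})\,S(B,d)$ in full generality is the principal algebraic challenge.
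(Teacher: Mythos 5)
Your opening per-voter reduction is valid, but the intermediate target it produces --- $|CA|\cdot d(A,C)+|BC|\cdot d(B,C)\leq\left(1+\sqrt{5}\right)S(B,d)$ --- is \emph{false}, and it already fails on the very instance you cite for tightness. Take points on a line with $A$ at $0$, $B$ at $2$, $C$ at $4$; let a $(1-\varphi)$ fraction of voters sit at $B$ with ordering $B\succ A\succ C$ (type $BAC$) and a $\varphi$ fraction sit at $3$ with ordering $C\succ B\succ A$ (type $CBA$) --- this is the right-hand instance of Figure~\ref{fig:tightness} with $\lambda=\varphi$. The hypotheses hold with $|AC|=(1-\varphi)m$ and $|CB|=\varphi m$, and $S(A,d)=\left(2+\sqrt{5}\right)S(B,d)$ holds with equality. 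But normalizing $m=1$, we have $|CA|=\varphi$, $|BC|=1-\varphi$, $d(A,C)=4$, $d(B,C)=2$, so your target demands $4\varphi+2(1-\varphi)=2+2\varphi\approx 3.24$ be at most $\left(1+\sqrt{5}\right)\varphi=2$. No finer accounting with $|AC\cap BC|$ in place of $|BC|$ can repair this, because the loss occurs one step earlier: for voters in $CA\cap CB$ your bound $d(A,v)\leq d(B,v)+d(A,C)$ commits to charging $d(A,C)$, and on this instance it is loose by $2$ per voter --- the trivial route $d(A,v)\leq d(A,B)+d(B,v)$ is the tight one, since here $d(A,C)=2\,d(A,B)$. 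Which route is correct depends on the metric, not the ordinal data, so any decomposition that fixes the charged quantity from preferences alone cannot be tight; indeed, contrary to your tightness claim, on the $BAC:CBA=(1-\varphi):\varphi$ profile your own inequalities are \emph{not} all equalities, which is the symptom of the gap.

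This is precisely the trap the paper's proof is structured to avoid: it never charges $d(A,C)$, works only with $d(A,B)$ and $d(B,C)$, and branches on the metric-dependent comparison $d(C,B)\geq d(A,B)$ versus $d(C,B)<d(A,B)$. In each branch it lower-bounds $S(B,d)$ by $\frac{\varphi}{2}\sum_{v\in BA}\min_{C'\in Q_v(A)}d(B,C')$ (using the triple sets $BAC$, $CBA$, $BCA$ in the second branch) and then invokes Lemma~\ref{lem:lem_from_1} with $\gamma=2/\varphi=1+\sqrt{5}$; the $\min$ over $Q_v(A)$ is exactly the device that selects, per voter, whichever of $d(B,A)$ or $d(B,C)$ is actually smaller. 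To salvage your approach you would have to replace your $CA\cap CB$ bound by $d(A,v)\leq d(B,v)+\min\left(d(A,C),\,d(A,B)\right)$ and case-split on which term attains the minimum --- at which point you have essentially reconstructed the paper's argument. Your golden-ratio bookkeeping ($|CA|/|AC|\leq 1/\varphi=1+\varphi$ and $|BC|/|CB|\leq\varphi$) is correct as far as it goes, but it is applied to a reduced inequality that is unprovable.
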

To prove Lemma~\ref{lem:two_edges}. We need the following technical lemma from~\cite{anshelevich2018approximating}, whose proof is included here for completeness. Recall that $Q_v(A)$ is the set of candidates that $v$ likes at most as much as $A$.
\begin{lemma}[Lemma 14 in~\cite{anshelevich2018approximating}]
If $\sum_{v \in \mathcal{V}} d(v, B) \geq \frac{1}{\gamma} \sum_{v \in BA} \min_{C \in Q_v(A)} d(B, C)$, then $S(A, d) \leq (1 + \gamma)S(B, d)$ for any metric $d$.
\label{lem:lem_from_1}
\end{lemma}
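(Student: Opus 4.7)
The plan is to split $S(A,d)$ by which of $A,B$ each voter prefers, and bound the two pieces separately. For $v\in AB$, consistency of $d$ with $\bm{\sigma}$ gives $d(v,A)\le d(v,B)$ immediately, so this group contributes at most $\sum_{v\in AB}d(v,B)$. The nontrivial work is in bounding $d(v,A)$ for voters $v\in BA$.

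For $v\in BA$, the natural idea is to route the distance $d(v,A)$ through $B$ via the triangle inequality, but simply writing $d(v,A)\le d(v,B)+d(B,A)$ is too lossy. Instead I would exploit the definition of $Q_v(A)$: for any $C\in Q_v(A)$, voter $v$ likes $C$ at most as much as $A$, so consistency yields $d(v,A)\le d(v,C)$, and then the triangle inequality gives $d(v,C)\le d(v,B)+d(B,C)$. Since this holds for every $C\in Q_v(A)$, one may take the minimum over $C$ to obtain
\[
d(v,A)\ \le\ d(v,B)+\min_{C\in Q_v(A)} d(B,C).
\]
This is the key step, and choosing the best $C$ for each voter independently is what makes the bound tight enough to be useful.

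Summing the two contributions, one gets
\[
S(A,d)\ \le\ \sum_{v\in AB}d(v,B)+\sum_{v\in BA}d(v,B)+\sum_{v\in BA}\min_{C\in Q_v(A)}d(B,C)\ =\ S(B,d)+\sum_{v\in BA}\min_{C\in Q_v(A)}d(B,C).
\]
Finally, the hypothesis of the lemma, $\sum_{v\in\mathcal{V}}d(v,B)\ge \tfrac{1}{\gamma}\sum_{v\in BA}\min_{C\in Q_v(A)}d(B,C)$, rewrites exactly as $\sum_{v\in BA}\min_{C\in Q_v(A)}d(B,C)\le \gamma\,S(B,d)$, which plugs in to give $S(A,d)\le (1+\gamma)S(B,d)$.

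The main obstacle is not computational but conceptual: one has to notice that $Q_v(A)$ is the right set to route through, i.e., that consistency lets us replace $A$ by any candidate $v$ ranks below $A$, and then pick, per voter, the candidate closest to $B$. Once this observation is in place, the argument is a one-line combination of consistency, the triangle inequality, and the given inequality.
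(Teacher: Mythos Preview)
Your proof is correct and follows essentially the same approach as the paper's: split voters into $AB$ and $BA$, use consistency to bound $d(v,A)\le d(v,C)$ for any $C\in Q_v(A)$, apply the triangle inequality $d(v,C)\le d(v,B)+d(B,C)$, take the minimum over $C$, sum, and invoke the hypothesis. The paper phrases it as a ratio $S(A,d)/S(B,d)$ rather than as an additive bound on $S(A,d)$, but the logic is identical.
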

\begin{proof}
By definition of $Q_v(A)$ and the triangle inequality,
\begin{align*}
\frac{S(A, d)}{S(B, d)} &= \frac{\sum_{v \in \mathcal{V}} d(A, v)}{\sum_{v \in \mathcal{V}} d(B, v)} = \frac{\sum_{v \in AB} d(A, v) + \sum_{v \in BA} d(A, v)}{\sum_{v \in \mathcal{V}} d(B, v)}\\
&\leq \frac{\sum_{v \in AB} d(B, v) + \sum_{v \in BA} (d(A, v) - d(B, v) + d(B, v))}{\sum_{v \in \mathcal{V}} d(B, v)}\\
&\leq \frac{\sum_{v \in \mathcal{V}} d(B, v) + \sum_{v \in BA} \min_{C \in Q_v(A)} (d(C, v) - d(B, v))}{\sum_{v \in \mathcal{V}} d(B, v)}\\
&\leq 1 + \frac{\sum_{v \in BA} \min_{C \in Q_v(A)} d(B, C)}{\sum_{v \in \mathcal{V}} d(B, v)} \leq 1 + \gamma.\qedhere
\end{align*}
\end{proof}

\begin{proof}[Proof of Lemma~\ref{lem:two_edges}]
We perform different analysis depending on whether $d(C, B) \geq d(A, B)$ or $d(C, B) < d(A, B)$.

\noindent\textbf{Case (1):} When $d(C, B) \geq d(A, B)$:
\begin{align*}
\sum_{v \in \mathcal{V}} d(v, B) &\geq \sum_{v \in CB} d(v, B) \geq \frac{1}{2}\sum_{v \in CB} d(C, B) = \frac{1}{2} |CB| \cdot d(C, B).
\end{align*}
Using the assumption that $d(C, B) \geq d(A, B)$ in this case,
\begin{align*}
\sum_{v \in \mathcal{V}} d(v, B) &\geq \frac{\varphi m}{2} \cdot d(A, B) \geq \frac{\varphi}{2} \cdot \sum_{v \in BA} d(A, B).
\end{align*}
Then we apply Lemma~\ref{lem:lem_from_1} to get $S(A, d) \leq \left(1 + \frac{2}{\varphi}\right) S(B, d) = \left(2 + \sqrt{5}\right) S(B, d)$.

\noindent\textbf{Case (2):} When $d(C, B) < d(A, B)$:
\begin{align*}
\sum_{v \in \mathcal{V}} d(v, B) &\geq \sum_{v \in AB \cup BAC \cup CBA} d(v, B)\\
&\geq \frac{1}{2} \sum_{v \in AB} d(A, B) + \sum_{v \in BAC} d(v, B) + \frac{1}{2} \sum_{v \in CBA} d(B, C).
\end{align*}
Notice that for $v \in BAC$, $d(v, B) + d(B, C) \geq d(v, C) \geq d(v, A) \geq d(A, B) - d(v, B)$. Thus,
\begin{align*}
\sum_{v \in \mathcal{V}} d(v, B) &\geq \frac{1}{2} \sum_{v \in AB} d(A, B) + \frac{1}{2} \sum_{v \in BAC} (d(A, B) - d(B, C)) + \frac{1}{2} \sum_{v \in CBA} d(B, C)\\
&= \frac{1}{2} \left((|AB| + |BAC|) \cdot d(A, B) + (|CBA| - |BAC|) \cdot d(B, C)\right)\\
&= \frac{1}{2} \left((|AB| + |BAC|) \cdot (d(A, B) - d(B, C)) + (|CBA| + |AB|) \cdot d(B, C)\right).
\end{align*}
Because $AC \subseteq BAC \cup AB$ and $CB \subseteq CBA \cup AB$,
\begin{align*}
\sum_{v \in \mathcal{V}} d(v, B) &\geq \frac{1}{2} \left(|AC| \cdot (d(A, B) - d(B, C)) + |CB| \cdot d(B, C)\right)\\
&\geq \frac{1}{2} \left((1 - \varphi) m \cdot (d(A, B) - d(B, C)) + \varphi m \cdot d(B, C)\right).
\end{align*}
The second inequality above comes from the conditions that $|AC| \geq (1 - \varphi)m$, $|CB| \geq \varphi m$ and the assumption that $d(A, B) > d(B, C)$ in this case. Therefore,
\begin{align*}
\sum_{v \in \mathcal{V}} d(v, B) &\geq \frac{1}{2} \left(\frac{1 - \varphi}{\varphi} \cdot |CA| \cdot (d(A, B) - d(B, C)) + \varphi \cdot |BA| \cdot d(B, C)\right)\\
&\geq \frac{\varphi}{2} \left((|CBA| + |BCA|) \cdot (d(A, B) - d(B, C)) + (|CBA| + |BCA| + |BAC|) \cdot d(B, C)\right)\\
&= \frac{\varphi}{2} \left((|CBA| + |BCA|) \cdot d(A, B) + |BAC| \cdot d(B, C)\right)\\
&= \frac{\varphi}{2} \left(\sum_{v \in CBA \cup BCA} d(A, B) + \sum_{v \in BAC} d(B, C)\right),
\end{align*}
where the second inequality is from $CBA \cup BCA \subseteq CA$ and $CBA \cup BCA \cup BAC = BA$. Again by $BA = CBA \cup BCA \cup BAC$, we apply Lemma~\ref{lem:lem_from_1} to get $S(A, d) \leq \left(1 + \frac{2}{\varphi}\right) S(B, d) = \left(2 + \sqrt{5}\right) S(B, d)$.
\end{proof}

\begin{proof}[Proof of Theorem~\ref{theorem:4.236}]
By Lemma~\ref{lem:weighted_uncovered_nonempty}, \textsc{WeightedUncovered} always selects a candidate in the uncovered set. By Lemma~\ref{lem:one_edge}, Lemma~\ref{lem:two_edges} and the definition of \textsc{WeightedUncovered}, \textsc{WeightedUncovered} has distortion of at most $2 + \sqrt{5}$.
\end{proof}

\subsection{Lower Bounds}
\label{sec:lower_bound}
An immediate question is whether our upper bound is tight. Also, is it possible to choose a $\lambda \neq \varphi$ in the $\lambda$-weighted uncovered set to improve the distortion? In this section, we show that our choice of $\lambda = \varphi$ and our analysis are indeed tight. Before showing the lower bounds, we complete the definition of $\lambda$-weighted uncovered sets for $\lambda < 0.5$.
\begin{definition}
Let $\lambda \in [0, 0.5)$ be a constant. The \emph{$\lambda$-weighted uncovered set} is the set of candidates $A \in \mathcal{C}$ such that for any candidate $B \neq A$, we have:
\begin{itemize}
\item either $|AB| \geq \lambda m$ (\emph{different from the case where $\lambda \geq 0.5$}),
\item or there is another node $C \notin \{A, B\}$ so that $|AC| \geq (1 - \lambda) m$ and $|CB| \geq \lambda m$.
\end{itemize}
\end{definition}
We did not define $\lambda$-weighted uncovered sets for $\lambda < 0.5$ earlier because for any such $\lambda$, selecting any candidate in the $\lambda$-weighted uncovered set can have distortion worse than $5$. Now we show lower bounds for all $\lambda \in [0, 1]$.
\begin{theorem}
For any $\lambda \in [0, 1]$, choosing the alphabetically smallest candidate in the $\lambda$-weighted uncovered set has distortion of at least $2 + \sqrt{5}$.
\label{theorem:lower}
\end{theorem}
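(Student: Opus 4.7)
The plan is to construct, for every $\lambda \in [0,1]$, an explicit voting profile and consistent metric on which the rule achieves distortion at least $2+\sqrt 5$. Two different families of small instances suffice, split at $\lambda = \varphi$.

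For $\lambda \in [\varphi, 1]$, I would reuse the two-candidate tight instance for Lemma~\ref{lem:one_edge}: take candidates $A$ and $B$ at positions $0$ and $1$ on the line, place $(1-\varphi)m$ voters at the midpoint (with the $A$-vs.-$B$ tie broken in favor of $A$), and put the remaining $\varphi m$ voters essentially at $B$. Then $|AB|=(1-\varphi)m \geq (1-\lambda)m$ since $\lambda \geq \varphi$, so the direct-edge condition puts $A$ in the $\lambda$-weighted uncovered set, and as $A$ is alphabetically first it is selected. A direct calculation gives $S(A)=\tfrac{(1+\varphi)m}{2}$ and $S(B)=\tfrac{(1-\varphi)m}{2}$, so the ratio is $(1+\varphi)/(1-\varphi)=2+\sqrt 5$.

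For $\lambda \in [0, \varphi)$, the direct-edge bound $(1+\lambda)/(1-\lambda)$ is strictly smaller than $2+\sqrt 5$, so the construction must force $A$ into the uncovered set exclusively through the two-hop path. I would take three candidates $A, B, C$ at pairwise distance $1$ (an equilateral metric, realizable as shortest-path distances on a triangle graph), together with two voter groups: $(1-\lambda)m$ voters with preference $B \succ A \succ C$ placed arbitrarily close to $B$, and $\lambda m$ voters with preference $C \succ B \succ A$ placed at an auxiliary point $P$ adjoined to the triangle via edges of length $\tfrac12$ to $B$ and $C$, so $d(P,B)=d(P,C)=\tfrac12$ and $d(P,A)=\tfrac32$. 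A direct inventory yields $|AB|=0$, $|AC|=(1-\lambda)m$, $|CB|=\lambda m$: the direct-edge condition against $B$ fails, the two-hop condition via $C$ holds with equality, and the condition against $C$ holds directly since $|AC|\geq \max(\lambda,1-\lambda)\cdot m$. Hence $A$, alphabetically first, is selected. Finally, computing $S(A) = (1-\lambda)m + \tfrac{3\lambda m}{2} = \tfrac{(2+\lambda)m}{2}$ and $S(B)=\tfrac{\lambda m}{2}$ gives the ratio $(2+\lambda)/\lambda$, which equals $2+\sqrt 5$ at $\lambda=\varphi$ and strictly exceeds it for all smaller $\lambda$.

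The main obstacle is the $\lambda<\varphi$ case: because $(1+\lambda)/(1-\lambda)<2+\sqrt 5$ in this range, the instance cannot give $A$ even the direct-edge condition against $B$, forcing $|AB|=0$; yet simultaneously a large voter group must lie within distance $\tfrac12$ of both $B$ and $C$ while being at distance $\tfrac32$ from $A$, which is impossible in any Euclidean realization of an equilateral triangle. The triangle-plus-pendant shortest-path metric is the minimal non-Euclidean resolution and is precisely the tight configuration for Lemma~\ref{lem:two_edges}'s Case~2, explaining why the upper bound of $2+\sqrt 5$ cannot be further improved within this class.
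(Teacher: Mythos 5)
Your proof is correct and takes essentially the same approach as the paper: the same two tight families (a two-candidate instance with voters at the midpoint and at $B$ giving ratio $(1+\varphi)/(1-\varphi)=2+\sqrt{5}$, and a three-candidate instance with tallies $|AB|=0$, $|AC|=(1-\lambda)m$, $|CB|=\lambda m$ giving ratio $(2+\lambda)/\lambda$), with the same case split around $\varphi$. The only cosmetic differences are that the paper reuses the two-candidate instance to also rule out $\lambda \leq 1-\varphi$ while you cover all of $[0,\varphi)$ with the second instance (which checks out, since for $\lambda<0.5$ the condition against $C$ holds via $|AC|=(1-\lambda)m\geq\lambda m$), and the paper realizes the second instance on a line with $d(A,C)=4$ rather than your triangle-plus-pendant graph metric --- your remark that a non-Euclidean metric is forced applies only because you insisted on an equilateral triangle, which is unnecessary, as the relevant voter--candidate distances are identical in both realizations.
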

\begin{proof}
The tight examples are illustrated in Figure~\ref{fig:tightness}. In both cases, all points lie on a straight line. In the left case, let $p$ portion of voters be of type $v_1$ and $1 - p$ portion of voters be of type $v_2$.\footnote{Technically this is only doable for $\lambda \in \mathbb{Q}$, but we can add an arbitrarily small portion of other voters to make $A$ in the $\lambda$-weighted uncovered set.} When $p \leq 1 - \varphi$, the distortion of choosing $A$ is at least $\frac{(1 - \varphi) \cdot 1 + \varphi \cdot 2}{(1 - \varphi) \cdot 1 + \varphi \cdot 0} = 2 + \sqrt{5}$. Since $A$ is in the $p$-weighted uncovered set and $(1 - p)$-weighted uncovered set, we rule out $\lambda \leq 1 - \varphi$ and $\lambda \geq \varphi$.

When $\lambda \in (1 - \varphi, \varphi)$, consider the right case in Figure~\ref{fig:tightness}. Let $1 - \lambda$ portion of voters be of type $v_1$ and $\lambda$ portion of voters be of type $v_2$. $A$ is in the $\lambda$-weighted uncovered set. However, the distortion for choosing $A$ comparing to $B$ is at least $\frac{(1 - \lambda) \cdot 2 + \lambda \cdot 3}{(1 - \lambda) \cdot 0 + \lambda \cdot 1} = 3 + 2 \cdot \frac{1 - \lambda}{\lambda}$. When $\lambda < \varphi$, the distortion is greater than $2 + \sqrt{5}$.
\end{proof}

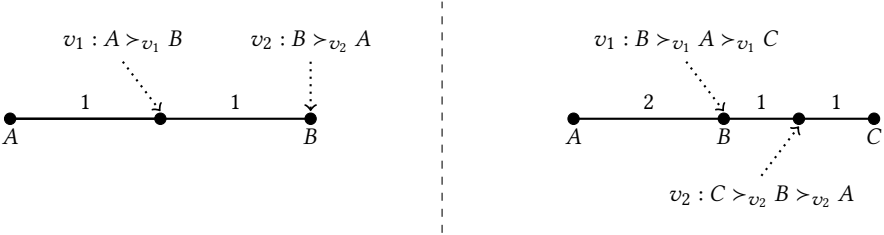
\begin{figure}[H]
\centering
\begin{tikzpicture}[scale=0.5]
\draw [fill=black] (0,0) circle [radius=0.15];
\draw [fill=black] (4,0) circle [radius=0.15];
\draw [fill=black] (8,0) circle [radius=0.15];

\draw [thick] (0, 0) to (4, 0);
\draw [thick] (0, 0) to (8, 0);

\draw [fill=black] (15,0) circle [radius=0.15];
\draw [fill=black] (19,0) circle [radius=0.15];
\draw [fill=black] (21,0) circle [radius=0.15];
\draw [fill=black] (23,0) circle [radius=0.15];

\draw [thick] (15, 0) to (19, 0);
\draw [thick] (19, 0) to (21, 0);
\draw [thick] (21, 0) to (23, 0);

\node [below] at (0,0) {\small $A$};
\node [below] at (8,0) {\small $B$};

\node [below] at (15,0) {\small $A$};
\node [below] at (19,0) {\small $B$};
\node [below] at (23,0) {\small $C$};

\draw [thick,dotted,->] (3, 1.5) to (4, 0.2);
\draw [thick,dotted,->] (8, 1.5) to (8, 0.2);
\draw [thick,dotted,->] (18, 1.5) to (19, 0.2);
\draw [thick,dotted,->] (20, -1.5) to (21, -0.2);

\node [above] at (2, 0) {\small $1$};
\node [above] at (6, 0) {\small $1$};
\node [above] at (17, 0) {\small $2$};
\node [above] at (20, 0) {\small $1$};
\node [above] at (22, 0) {\small $1$};

\node [above] at (3, 1.5) {\small $v_1 : A \succ_{v_1} B$};
\node [above] at (8, 1.5) {\small $v_2 : B \succ_{v_2} A$};
\node [above] at (18, 1.5) {\small $v_1 : B \succ_{v_1} A \succ_{v_1} C$};
\node [below] at (20, -1.5) {\small $v_2 : C \succ_{v_2} B \succ_{v_2} A$};

\draw [dashed] (11.5, -3) to (11.5, 3.08);
\end{tikzpicture}
\caption{Tight Examples for \textsc{WeightedUncovered} with Other Parameters}
\label{fig:tightness}
\end{figure}

The \emph{fairness ratio} of social choice rules was proposed in \cite{goel2017metric}:
\begin{definition}
The \emph{fairness ratio} $\Phi(f)$ of a social choice rule $f$ is:
\[
\Phi(f) = \sup_{\bm{\sigma}} \sup_{d \text{ consistent with } {\bm{\sigma}}} \max_{1 \leq k \leq m} \frac{\max_{R \subseteq \mathcal{V} : |R| = k} d(v, f(\bm{\sigma}))}{\min_{A \in \mathcal{C}} \max_{R \subseteq \mathcal{V} : |R| = k} d(v, A)}.
\]
\end{definition}
The fairness ratio $\Phi(f)$ is at least the distortion $\Delta(f)$ because it approximates the sum of $k$ largest costs for any $k$ including $m$.
The distortion of $\textsc{Copeland}$ is $5$~\cite{anshelevich2018approximating} and its fairness ratio is $5$~\cite{goel2017metric} as well. Since $\textsc{WeightedUncovered}$ improves the distortion, one may ask whether it improves the fairness ratio as well. We answer this in the negative:
\begin{theorem}
For any $\lambda \in [0, 1]$, choosing the alphabetically smallest candidate in the $\lambda$-weighted uncovered set has a fairness ratio of at least $5$.
\label{theorem:fairness_lower}
\end{theorem}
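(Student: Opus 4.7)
The plan is to prove the lower bound by exhibiting, for every $\lambda \in [0,1]$, a voting profile and a consistent metric under which the alphabetically smallest candidate in the $\lambda$-weighted uncovered set attains, for some choice of $k \in \{1,\dots,m\}$, a top-$k$ cost at least $5 - \varepsilon$ times the corresponding top-$k$ cost of some other candidate. Because the fairness ratio is a supremum over profiles and metrics and a maximum over $k$, any such witness suffices, and $\varepsilon$ can be pushed to $0$ by continuity in the voter fractions.

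I would split the argument by $\lambda$, mirroring the case analysis used in the proof of Theorem~\ref{theorem:lower}. For $\lambda \geq 1/2$, the $\lambda$-weighted uncovered set is a superset of the ordinary uncovered set (the condition $|AB|\geq (1-\lambda)m$ is implied by $|AB|\geq m/2$, and the covering witness is only made easier to find since $|AC|\geq (1-\lambda)m$ is weaker). Hence I would reuse the standard tight example that shows the uncovered-set rule (e.g.~\textsc{Copeland}) has fairness ratio exactly $5$ (as established in~\cite{anshelevich2018approximating,goel2017metric}). In such an instance, the alphabetically smallest element of the ordinary uncovered set is still the alphabetically smallest element of the $\lambda$-weighted uncovered set, so the rule selects the same candidate and inherits the same fairness ratio $5$.

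For $\lambda < 1/2$, the definition flips to $|AB|\geq \lambda m$ or the harder covering $|AC|\geq (1-\lambda)m$, $|CB|\geq \lambda m$. Here I would construct a separate example in which a particularly bad candidate $A$, placed far from a sizable voter cluster located near some $B$, nonetheless satisfies the weak pairwise condition $|A B'|\geq \lambda m$ against every other candidate $B'$ (or, where necessary, the covering condition via a carefully chosen $C$ in between on the metric line). Choosing $k$ equal to the size of the far cluster forces the top-$k$ cost of $A$ to scale like $d(A,B)+d^*_B$ while the top-$k$ cost of $B$ stays at about $d^*_B$, so tuning $d(A,B)$ relative to $d^*_B$ yields a ratio approaching $5$.

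The main obstacle lies in the intermediate regime, in particular $\lambda \in (1-\varphi,\varphi)$, where Theorem~\ref{theorem:4.236} bounds the \emph{total-cost} distortion by $2+\sqrt{5}<5$. There, the extra slack in the fairness ratio must come from the max-over-$k$ structure: the top-$k$ contribution from voters who dislike $A$ can be blown up by a triangle-inequality argument even when their contribution is amortized away in the sum-cost analysis. Verifying simultaneously that the constructed $A$ stays in the $\lambda$-WUS (pairwise or via a covering witness) and that the chosen $k$-subset achieves the ratio $5-\varepsilon$ will be the main technical work, but the line instances analogous to Figure~\ref{fig:tightness}, with an added third candidate to preserve the covering condition across $\lambda$, should suffice.
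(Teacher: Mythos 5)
Your reduction for $\lambda \geq 1/2$ rests on a false containment. The $\lambda$-weighted uncovered set is \emph{not} in general a superset of the ordinary uncovered set when $\lambda > 1/2$: the direct condition $|AB| \geq (1-\lambda)m$ is indeed weaker than $|AB| \geq m/2$, but the two-step witness now requires $|CB| \geq \lambda m$, which is \emph{strictly stronger} than the bound $|CB| \geq m/2$ that ordinary covering provides. Concretely, with $|AB| = 0.2m$, $|AC| = 0.6m$, $|CB| = 0.5m$, the candidate $A$ is in the ordinary uncovered set (via the witness $C$) but is not in the $0.7$-weighted uncovered set, since $0.2 < 0.3$ and $0.5 < 0.7$. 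So you cannot simply inherit the known fairness-ratio-$5$ example for \textsc{Copeland}: membership of the bad candidate must be re-verified for each $\lambda$, and precisely in the margins-near-$1/2$ constructions used for those lower bounds the two-step witnesses degrade as $\lambda$ grows. (Your additional claim that the bad candidate remains alphabetically smallest in the possibly different set is also unargued, though that part is repairable by renaming candidates.) For $\lambda < 1/2$ and for the intermediate regime $\lambda \in (1-\varphi, \varphi)$ --- which you correctly identify as the crux, since the sum-cost distortion there is at most $2+\sqrt{5} < 5$ --- you produce no instance at all: ``a carefully chosen $C$'' and ``should suffice'' defer exactly the verification (simultaneous $\lambda$-WUS membership and a top-$k$ ratio of $5$) that constitutes the theorem. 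As it stands, the proposal is a proof plan whose central constructions are missing.

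For comparison, the paper's proof is a single short construction rather than a case split on $\lambda$ against known examples. The endpoints $\lambda \in \{0, 1\}$ are dispatched by noting that the $\lambda$-weighted uncovered set then contains every candidate, so the fairness ratio is $+\infty$. For $\lambda \in (0,1)$ it uses one three-candidate instance with two voter types whose \emph{fractions are the parameter itself} ($\lambda$ of type $v_1$ and $1-\lambda$ of type $v_2$, with an arbitrarily small perturbation handling irrational $\lambda$ and boundary ties), so that $A$ lies in the $\lambda$-weighted uncovered set for every $\lambda$ simultaneously; the explicit metric of Table~\ref{tab:fairness} then gives $d(A, v_1) = 5$ while every voter is within distance $1$ of $B$, so the $k = 1$ (worst-off voter) term already yields ratio $5$. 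This realizes concretely, at $k = 1$, exactly the ``max-over-$k$ slack'' you anticipated for the intermediate regime; your instinct about where the slack must come from is right, but what is missing is any verified witness instance, and the $\lambda \geq 1/2$ branch as written must be discarded or rebuilt (for example on instances with all pairwise margins tending to $1/2$) rather than justified by the claimed set containment.
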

\begin{proof}
When $\lambda = 0$ or $\lambda = 1$, the uncovered set contains every candidate so the fairness ratio is $+\infty$. Otherwise, consider a setting where $\mathcal{C} = \{A, B, C\}$ and $\mathcal{V}$ contains multiple copies of $v_1$ and $v_2$. The distances between pairs of points are given in Table~\ref{tab:fairness}. One can verify they satisfy the triangle inequality.

The preference of $v_1$ is $C \succ B \succ A$ and that of $v_2$ is $C \succ A \succ B$. Let $\lambda m$ voters be of $v_1$ type and $(1 - \lambda) m$ voters be of $v_2$ type.\footnote{Again, this is only doable for $\lambda \in \mathbb{Q}$, but we can add an arbitrarily small portion of other voters to make the $\lambda$-weighted uncovered set include $A$. Instead of looking at the worst-off $1$ voter in the end, we look at the worst-off $\varepsilon$ portion of voters for a small $\varepsilon$.} $A$ is in the $\lambda$-weighted uncovered set, but the fairness ratio for choosing $A$ is at least $5$ comparing to $B$, given by the worst-off $1$ voter ($v_1$).
\end{proof}

\begin{table}[H]
    \begin{tabular}{| l | l | l | l | l | l |}
    \hline
    Points & $A$ & $B$ & $C$ & $v_1$ & $v_2$ \\ \hline
    $A$ & $0$ & $4$ & $4$ & $\textcolor{myred}{\bm{5}}$ & $\textcolor{myred}{\bm{3}}$ \\ \hline
    $B$ & $4$ & $0$ & $2$ & $\textcolor{myred}{\bm{1}}$ & $\textcolor{myred}{\bm{1}}$  \\ \hline
    $C$ & $4$ & $2$ & $0$ & $1$ & $3$ \\ \hline
    $v_1$ & $5$ & $1$ & $1$ & $0$ & $2$\\ \hline
    $v_2$ & $3$ & $1$ & $3$ & $2$ & $0$ \\ \hline
    \end{tabular}
\caption{Adjacency Matrix of the Tight Example for Fairness}
\label{tab:fairness}
\end{table}

% !TEX root = main.tex
\section{Matching-Uncovered-Set Voting Rule}
\label{sec:matching}
In this section, we present the \textsc{MatchingUncovered} rule, which is based on an alternative generalization of uncovered sets. We introduce \emph{matching uncovered sets} and the \textsc{MatchingUncovered} rule in Section~\ref{sec:matching_description}, after which we analyze its distortion (Theorem~\ref{theorem:uncovered_set_3}) in Section~\ref{sec:matching_analysis}. In Section~\ref{sec:conjecture}, we propose a conjecture (Conjecture~\ref{conjecture}) which implies the distortion of \textsc{MatchingUncovered} is $3$. We use computer-assisted search to verify it holds when at most $14$ entities  are involved (Theorem~\ref{theorem:cv_small}). Finally, we give a sufficient condition (Lemma~\ref{lem:reduction_2}) for the matching uncovered set to be nonempty in Section~\ref{sec:symmetric}. We use this new condition to show when the weighted tournament graph is cyclically symmetric, every social choice rule has distortion of at most $3$ (Theorem~\ref{theorem:symmetric}).

\subsection{Description}
\label{sec:matching_description}
Recall that as in Definition~\ref{def:PQ}, $P_v(A)$ is the set of candidates that $v$ likes as least as much as $A$ and $Q_v(A)$ is the set of candidates that $v$ likes as most as much as $A$.
\begin{definition}
The \emph{matching uncovered set} is the set of candidates $A \in \mathcal{C}$ such that: For any other candidate $B \neq A$, there is a perfect matching in the bipartite graph $G(A, B)$ constructed in the following way:
\begin{itemize}
\item The vertices on the left and vertices on the right are both $\mathcal{V}$.
\item For a vertex $v$ on the left and a vertex $v'$ on the right, we draw an edge between them if $P_v(B) \cap Q_{v'}(A) \neq \varnothing$.
\end{itemize}
\end{definition}

\begin{example}
Let $\mathcal{C} = \{A, B, C\}$ and $\mathcal{V} = \{1, 2, 3\}$. The preference profile is:
\begin{center}
$A \succ_1 B \succ_1 C$

$B \succ_2 C \succ_2 A$

$C \succ_3 A \succ_3 B$
\end{center}
$A$ is in the matching uncovered set because $G(A, B)$, $G(A, C)$ both have perfect matchings, as shown in Figure~\ref{fig:matching_uncovered_set_1} and Figure~\ref{fig:matching_uncovered_set_2}. By cyclic symmetry, $B$ and $C$ are in the matching uncovered set too.
\begin{figure}[H]
\centering
\begin{tikzpicture}[scale=0.3]
\node [left] at (0,8) {$P_1(B) = \{A, B\}$};
\node [left] at (0,4) {$P_2(B) = \{B\}$};
\node [left] at (0,0) {$P_3(B) = \{A, B, C\}$};
\node [right] at (7,8) {$Q_1(A) = \{A, B, C\}$};
\node [right] at (7,4) {$Q_2(A) = \{A\}$};
\node [right] at (7,0) {$Q_3(A) = \{A, B\}$};

\draw (1, 8) to (6, 8);
\draw [ultra thick, myred] (1, 8) to (6, 4);
\draw (1, 8) to (6, 0);
\draw [ultra thick, myred] (1, 4) to (6, 8);
\draw (1, 4) to (6, 0);
\draw (1, 0) to (6, 8);
\draw (1, 0) to (6, 4);
\draw [ultra thick, myred] (1, 0) to (6, 0);

\draw [fill=white](1,8) circle [radius=1];
\draw [fill=white](1,4) circle [radius=1];
\draw [fill=white](1,0) circle [radius=1];

\draw [fill=white](6,8) circle [radius=1];
\draw [fill=white](6,4) circle [radius=1];
\draw [fill=white](6,0) circle [radius=1];

\node at (1,8) {$1$};
\node at (1,4) {$2$};
\node at (1,0) {$3$};

\node at (6,8) {$1$};
\node at (6,4) {$2$};
\node at (6,0) {$3$};
\end{tikzpicture}
\caption{The Bipartite Graph $G(A, B)$}
\label{fig:matching_uncovered_set_1}
\end{figure}

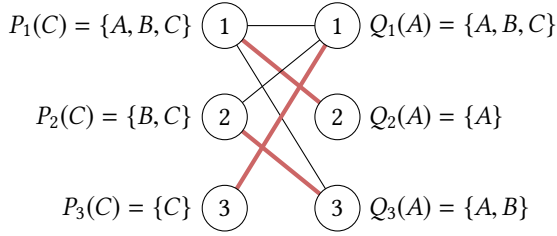
\begin{figure}[H]
\centering
\begin{tikzpicture}[scale=0.3]
\node [left] at (0,8) {$P_1(C) = \{A, B, C\}$};
\node [left] at (0,4) {$P_2(C) = \{B, C\}$};
\node [left] at (0,0) {$P_3(C) = \{C\}$};
\node [right] at (7,8) {$Q_1(A) = \{A, B, C\}$};
\node [right] at (7,4) {$Q_2(A) = \{A\}$};
\node [right] at (7,0) {$Q_3(A) = \{A, B\}$};

\draw (1, 8) to (6, 8);
\draw [ultra thick, myred] (1, 8) to (6, 4);
\draw (1, 8) to (6, 0);
\draw (1, 4) to (6, 8);
\draw [ultra thick, myred] (1, 4) to (6, 0);
\draw [ultra thick, myred] (1, 0) to (6, 8);

\draw [fill=white](1,8) circle [radius=1];
\draw [fill=white](1,4) circle [radius=1];
\draw [fill=white](1,0) circle [radius=1];

\draw [fill=white](6,8) circle [radius=1];
\draw [fill=white](6,4) circle [radius=1];
\draw [fill=white](6,0) circle [radius=1];

\node at (1,8) {$1$};
\node at (1,4) {$2$};
\node at (1,0) {$3$};

\node at (6,8) {$1$};
\node at (6,4) {$2$};
\node at (6,0) {$3$};
\end{tikzpicture}
\caption{The Bipartite Graph $G(A, C)$}
\label{fig:matching_uncovered_set_2}
\end{figure}
\end{example}

\begin{definition}
\textsc{MatchingUncovered} is a social choice rule that picks the alphabetically smallest candidate in the matching uncovered set. It picks the alphabetically smallest candidate if the matching uncovered set is empty.
\end{definition}

\subsection{Distortion Analysis}
\label{sec:matching_analysis}
The next theorem is the main reason why we introduce matching uncovered sets.
\begin{theorem}
Selecting any candidate from the matching uncovered set guarantees distortion of at most $3$.
\label{theorem:uncovered_set_3}
\end{theorem}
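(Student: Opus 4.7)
The plan is to exploit the perfect matching guaranteed by the definition of the matching uncovered set, together with two triangle inequalities. Fix any candidate $A$ in the matching uncovered set and any other candidate $B$; it suffices to show that $S(A, d) \leq 3\, S(B, d)$ for every consistent metric $d$.

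First, I would extract for each voter a ``witness'' candidate. By definition there is a perfect matching in $G(A, B)$; call the associated bijection $\pi : \mathcal{V} \to \mathcal{V}$. For each $v \in \mathcal{V}$, pick some $C_v \in P_v(B) \cap Q_{\pi(v)}(A)$. Unpacking the definitions of $P_v$ and $Q_{\pi(v)}$ and using consistency of the metric (voter $v$ weakly prefers $C_v$ to $B$; voter $\pi(v)$ weakly prefers $A$ to $C_v$), this translates into the two distance bounds
\[
d(v, C_v) \leq d(v, B) \quad\text{and}\quad d(\pi(v), A) \leq d(\pi(v), C_v).
\]

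Next, I would chain the second bound with the triangle inequality on the triple $(\pi(v), v, C_v)$, and then substitute the first bound, to obtain
\[
d(\pi(v), A) \leq d(\pi(v), v) + d(v, C_v) \leq d(\pi(v), v) + d(v, B).
\]
Summing over $v$ and using that $\pi$ is a bijection, the left-hand side becomes $\sum_{v'} d(v', A) = S(A, d)$, while $\sum_v d(v, B) = S(B, d)$.

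It remains to control $\sum_v d(\pi(v), v)$, which I would do via a second triangle inequality through $B$: $d(\pi(v), v) \leq d(\pi(v), B) + d(B, v)$, which sums to $2\, S(B, d)$. Combining the pieces yields $S(A, d) \leq 2\, S(B, d) + S(B, d) = 3\, S(B, d)$, as desired. The only real content is the construction of $C_v$ and noticing that the two preference conditions ($v$ weakly prefers $C_v$ to $B$; $\pi(v)$ weakly prefers $A$ to $C_v$) dovetail with the triangle inequality to funnel cost at $A$ through the matching into cost at $B$; the potential pitfall is simply mis-orienting the $P$/$Q$ inequalities, after which everything is routine.
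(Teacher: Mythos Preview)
Your argument is correct, and it is genuinely simpler than the paper's own proof. The paper introduces $U(v,A,B,d)=d(A,v)-3d(B,v)$, splits $\mathcal{V}$ into $AB$ and $BA$, extracts from the perfect matching a submatching of size at least $m-2|AB|$ among the $BA$ vertices only, and then applies a five-case collection of inequalities (Lemma~\ref{lem:inequalities}) together with a telescoping cancellation of terms $d(B,K(R_v))$ and $-d(B,K(L_v))$. By contrast, you use the whole matching as a single bijection $\pi$ and pass through the voter--voter distance $d(\pi(v),v)$; two applications of the triangle inequality suffice, with no case split and no bookkeeping. The one structural difference worth noting is that the paper's route never uses distances between two voters --- every bound is expressed via candidate--candidate distances $d(A,B)$, $d(B,X)$, $d(B,Y)$ --- whereas your argument does rely on the metric being defined on all of $\mathcal{A}=\mathcal{C}\cup\mathcal{V}$. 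In the present setting that is given, so your proof is not only valid but preferable for its directness.
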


\begin{corollary}
If the matching uncovered set is nonempty, \textsc{MatchingUncovered} has distortion of at most $3$.
\label{cor:matching_uncovered_3}
\end{corollary}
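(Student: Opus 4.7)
The plan is very short: the statement is an immediate corollary of Theorem~\ref{theorem:uncovered_set_3}, which I am allowed to assume. The only thing that requires any verification is that the definition of \textsc{MatchingUncovered} is compatible with applying that theorem in the case when the matching uncovered set is nonempty.

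Unpacking the definition of \textsc{MatchingUncovered}, when the matching uncovered set is nonempty, the rule picks the alphabetically smallest candidate in that set; in particular, the winner is some candidate $A$ that belongs to the matching uncovered set. Theorem~\ref{theorem:uncovered_set_3} guarantees that for any consistent metric $d$ and any $B \in \mathcal{C}$ we have $S(A,d) \leq 3\, S(B,d)$, and therefore $S(A,d) \leq 3 \min_{B \in \mathcal{C}} S(B,d)$. Taking the supremum over voting profiles $\bm{\sigma}$ for which the matching uncovered set is nonempty, and over metrics $d$ consistent with $\bm{\sigma}$, yields $\Delta(\textsc{MatchingUncovered}) \leq 3$ on this family of instances, which is exactly the content of the corollary.

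There is no real obstacle: the alphabetic tie-breaking chooses among elements of the matching uncovered set (not candidates outside it), so the hypothesis of Theorem~\ref{theorem:uncovered_set_3} is satisfied for the selected winner. The corollary is therefore a one-line invocation of the previous theorem together with the bookkeeping definition of the rule.
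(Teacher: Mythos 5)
Your proposal is correct and matches the paper's treatment: the paper states this corollary without a separate proof precisely because, as you observe, when the matching uncovered set is nonempty the rule's winner lies in that set, and Theorem~\ref{theorem:uncovered_set_3} then gives the distortion bound of $3$ directly. Your extra remark about the alphabetic tie-breaking choosing only among elements of the set is exactly the right (and only) thing to check.
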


Before proving Theorem~\ref{theorem:uncovered_set_3}, we introduce a new notation specifically for proving distortion of $3$. Fixing the preference orderings $\{\sigma_k\}_{k = 1}^m$, define
$$U(v, A, B, d) := d(A, v) - 3d(B, v).$$
\begin{lemma}
A social choice rule $f$ has distortion of at most 3 if and only if for any voting profile $\bm{\sigma} = \{\sigma_k\}_{k = 1}^m$ and for any other candidate $B \neq f\left(\bm{\sigma}\right)$,
$$\sum_{v \in \mathcal{V}} U(v, f\left(\bm{\sigma}\right), B, d) \leq 0$$
for any metric $d$ that is consistent with $\bm{\sigma}$.
\label{lem:trivial_reduction}
\end{lemma}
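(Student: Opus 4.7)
The plan is to observe that this lemma is essentially a restatement of the definition of distortion once we unfold the notation $U$, so the proof should amount to routine bookkeeping in both directions.

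First I would note the identity
\[
\sum_{v \in \mathcal{V}} U(v, A, B, d) \;=\; \sum_{v \in \mathcal{V}} d(A, v) - 3 \sum_{v \in \mathcal{V}} d(B, v) \;=\; S(A, d) - 3\,S(B, d),
\]
which follows directly from the definitions of $U$ and $S$. Hence the condition $\sum_v U(v, f(\bm{\sigma}), B, d) \leq 0$ is equivalent to $S(f(\bm{\sigma}), d) \leq 3\,S(B, d)$.

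For the ``if'' direction, I would assume that for every $\bm{\sigma}$, every $B \neq f(\bm{\sigma})$, and every consistent $d$ we have $S(f(\bm{\sigma}), d) \leq 3\,S(B, d)$. Taking $B^\star := \argmin_{A \in \mathcal{C}} S(A, d)$ (if $B^\star = f(\bm{\sigma})$, the inequality is trivial), we conclude $S(f(\bm{\sigma}), d) \leq 3 \min_{A \in \mathcal{C}} S(A, d)$. Taking the supremum over $\bm{\sigma}$ and consistent $d$ yields $\Delta(f) \leq 3$.

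For the ``only if'' direction, I would assume $\Delta(f) \leq 3$, so for every $\bm{\sigma}$ and every consistent $d$, $S(f(\bm{\sigma}), d) \leq 3 \min_{A \in \mathcal{C}} S(A, d) \leq 3\,S(B, d)$ for every candidate $B$, including $B \neq f(\bm{\sigma})$. Rearranging using the identity above gives $\sum_v U(v, f(\bm{\sigma}), B, d) \leq 0$, as required. There is no real obstacle here; the only subtlety is remembering that the lemma's condition quantifies over $B \neq f(\bm{\sigma})$, but this causes no issue since for $B = f(\bm{\sigma})$ the left-hand side is $-2\,S(f(\bm{\sigma}), d) \leq 0$ automatically.
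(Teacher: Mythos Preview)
Your proposal is correct and takes essentially the same approach as the paper: both simply unfold the definitions of $U$ and $S$ to rewrite the condition $\sum_v U(v, f(\bm{\sigma}), B, d) \leq 0$ as $S(f(\bm{\sigma}), d) \leq 3\,S(B, d)$, which is the definition of distortion at most $3$. You have written out the two directions with a bit more care than the paper's one-sentence proof, but the argument is identical.
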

\begin{proof}
Let $A = f(\bm{\sigma})$. A social choice rule $f$ has distortion of at most $3$ if and only if for any profile $\bm{\sigma}$ and any metric $d$ consistent with $\bm{\sigma}$, $\sum_{v \in \mathcal{V}} d(A, v) \leq 3\sum_{v \in \mathcal{V}} d(B, v)$ for any other $B \in \mathcal{C}$, \emph{i.e.}, $\sum_{v \in \mathcal{V}} U(v, A, B, d) \leq 0$.
\end{proof}

\begin{lemma}
For a voter $v$ and any consistent metric $d$:
\begin{enumerate}
\item If $v$ has preference $Y \succ_v B \succ_v A \succ_v X$, then $U(v, A, B, d) \leq d(B, X) - d(B, Y)$.
\item If $v$ has preference $Y \succ_v B \succ_v A$, then $U(v, A, B, d) \leq d(A, B) - d(B, Y)$.
\item If $v$ has preference $B \succ_v A \succ_v X$, then $U(v, A, B, d) \leq d(B, X)$.
\item If $v$ has preference $A \succ_v B$, then $U(v, A, B, d) \leq -d(A, B)$.
\item For any $v$, $U(v, A, B, d) \leq d(A, B)$.
\end{enumerate}
\label{lem:inequalities}
\end{lemma}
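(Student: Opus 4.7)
The plan is to verify each of the five inequalities separately. All five are direct consequences of two elementary sub-bounds that I would isolate up front. First, if $A \succ_v X$, then consistency gives $d(A,v) \leq d(X,v)$, and combining this with the triangle inequality $d(X,v) \leq d(B,X) + d(B,v)$ yields the upper bound $d(A,v) \leq d(B,X) + d(B,v)$. Second, if $Y \succ_v B$, then consistency gives $d(Y,v) \leq d(B,v)$, and the triangle inequality $d(B,Y) \leq d(B,v) + d(Y,v)$ then gives $d(B,Y) \leq 2d(B,v)$, equivalently $-2d(B,v) \leq -d(B,Y)$. These two sub-bounds will drive four of the five cases.

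For item (1), where $Y \succ_v B \succ_v A \succ_v X$, I would plug the first sub-bound into $U(v,A,B,d) = d(A,v) - 3d(B,v)$ to obtain $U \leq d(B,X) - 2d(B,v)$, and then apply the second sub-bound to replace $-2d(B,v)$ by $-d(B,Y)$, giving exactly the claimed bound. Item (2) follows the same pattern except that the direct triangle inequality $d(A,v) \leq d(A,B) + d(B,v)$ takes the place of the first sub-bound, producing $U \leq d(A,B) - 2d(B,v)$ and then $U \leq d(A,B) - d(B,Y)$ via the second sub-bound. Item (3) uses only the first sub-bound to obtain $U \leq d(B,X) - 2d(B,v)$, and the nonnegative term $-2d(B,v)$ is simply dropped. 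Item (5) is the same trick with no ordering assumption at all: the triangle inequality $d(A,v) \leq d(A,B) + d(B,v)$ gives $U \leq d(A,B) - 2d(B,v) \leq d(A,B)$.

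Item (4) is the only qualitatively different case, since there the preference $A \succ_v B$ forces $d(A,v) \leq d(B,v)$, so $U \leq -2d(B,v)$ outright; the triangle inequality $d(A,B) \leq d(A,v) + d(B,v) \leq 2d(B,v)$ then allows me to replace $-2d(B,v)$ by $-d(A,B)$. I do not anticipate any real obstacle; the difficulty is purely organizational. The only point of care is to make sure that the candidates $X$ and $Y$ are invoked with the correct orientation of preference (i.e.\ $A \succ_v X$ and $Y \succ_v B$) so that consistency applies in the right direction, and to consistently track where the surplus $-2d(B,v)$ term gets absorbed in each item.
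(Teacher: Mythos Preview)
Your proposal is correct and follows essentially the same argument as the paper's proof: each item is obtained from consistency (to replace $d(A,v)$ by $d(X,v)$ or $d(B,v)$ by $d(Y,v)$) combined with the triangle inequality, and the only difference is organizational---you isolate the two recurring sub-bounds $d(A,v)\le d(B,X)+d(B,v)$ and $-2d(B,v)\le -d(B,Y)$ up front, whereas the paper rederives them inline for each item. For item (4) the paper uses the slightly different decomposition $U=2(d(A,v)-d(B,v))-(d(A,v)+d(B,v))\le -(d(A,v)+d(B,v))\le -d(A,B)$, but this is equivalent to your version.
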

\begin{proof}
By the preference of $v$ and triangle inequality:
\begin{enumerate}
\item If $v$ has preference $Y \succ_v B \succ_v A \succ_v X$, then
\begin{align*}
\quad U(v, A, B, d) &= (d(v, A) - d(v, B)) - (d(v, B) + d(v, B))\\
&\leq (d(v, X) - d(v, B)) - (d(v, B) + d(v, Y))\\
&\leq d(B, X) - d(B, Y).
\end{align*}
\item If $v$ has preference $Y \succ_v B \succ_v A$, then
\begin{align*}
U(v, A, B, d) &= (d(v, A) - d(v, B)) - (d(v, B) + d(v, B))\\
&\leq (d(v, A) - d(v, B)) - (d(v, B) + d(v, Y))\\
&\leq d(A, B) - d(B, Y).
\end{align*}
\item If $v$ has preference $B \succ_v A \succ_v X$, then
\begin{align*}
U(v, A, B, d) &= (d(v, A) - d(v, B)) - 2d(v, B)\\
&\leq d(v, X) - d(v, B)\\
&\leq d(B, X).
\end{align*}
\item If $v$ has preference $A \succ_v B$, then
\begin{align*}
U(v, A, B, d) &= 2(d(v, A) - d(v, B)) - (d(v, A) + d(v, B))\\
&\leq -(d(v, A) + d(v, B))\\
&\leq -d(A, B).
\end{align*}
\item For any $v$,
\begin{align*}
U(v, A, B, d) &= d(v, A) - d(v, B) - 2d(v, B)\\
&\leq d(v, A) - d(v, B)\\
&\leq d(A, B).\qedhere
\end{align*}
\end{enumerate}
\end{proof}

\begin{proof}[Proof of Theorem~\ref{theorem:uncovered_set_3}]
Because of Lemma~\ref{lem:trivial_reduction}, we only need to show $\sum_{v \in \mathcal{V}} U(v, A, B, d) \leq 0$ if there is a perfect matching in $G(A, B)$.

We divide the set of voters into two disjoint subsets: $AB$ (those who prefer $A$ to $B$) and $BA$. By (4) in Lemma~\ref{lem:inequalities}, we have
\begin{align*}
\sum_{v \in \mathcal{V}} U(v, A, B, d) &= \sum_{v \in AB} U(v, A, B, d) + \sum_{v \in BA} U(v, A, B, d)\\
&\leq -|AB| \cdot d(A, B) + \sum_{v \in BA} U(v, A, B, d).
\end{align*}
For a voter $v$, we use $L_v$ to denote the vertex on the left representing $P_v(B)$ and $R_v$ the vertex on the right representing $Q_v(A)$. Because there is a perfect matching in the graph $G(A, B)$, there has to be a matching of size at least $m - 2|AB|$ within the vertices representing $BA$: $\{L_v \mid v \in BA\} \cup \{R_v \mid v \in BA\}$. We write $M(u)$ to denote the boolean variable whether the vertex $u$ is matched in this smaller matching. If $M(u)$ is true, we write $K(u)$ to denote the common candidate that $u$ uses in the matching. Note that the common candidate cannot be $A$ or $B$ between vertices representing $BA$. We have
\begin{align*}
&\sum_{v \in BA} U(v, A, B, d)\\
= &\sum_{v \in BA: M(L_v) \land M(R_v)} U(v, A, B, d) + \sum_{v \in BA: M(L_v) \land \lnot M(R_v)} U(v, A, B, d)\\
&+ \sum_{v \in BA: \lnot M(L_v) \land M(R_v)} U(v, A, B, d) + \sum_{v \in BA: \lnot M(L_v) \land \lnot M(R_v)} U(v, A, B, d)\\
\leq& \sum_{v \in BA: M(L_v) \land M(R_v)} (d(B, K(R_v)) - d(B, K(L_v))) + \sum_{v \in BA: M(L_v) \land \lnot M(R_v)} (d(A, B) - d(B, K(L_v)))\\
&+ \sum_{v \in BA: \lnot M(L_v) \land M(R_v)} d(B, K(R_v)) + \sum_{v \in BA: \lnot M(L_v) \land \lnot M(R_v)} d(A, B)\\
=& \sum_{v \in BA: M(L_v) \land \lnot M(R_v)} d(A, B) + \sum_{v \in BA: \lnot M(L_v) \land \lnot M(R_v)} d(A, B)\\
=& \sum_{v \in BA: \lnot M(R_v)} d(A, B).
\end{align*}
The inequality above is by Lemma~\ref{lem:inequalities}. The second last step is because the terms of $d(B, K(R_v))$ and $-d(B, K(L_v))$ cancel out in a matching. Therefore,
\begin{align*}
\sum_{v \in \mathcal{V}} U(v, A, B, d) &\leq -|AB| \cdot d(A, B) + \sum_{v \in BA} U(v, A, B, d)\\
&\leq -|AB| \cdot d(A, B) + |\{v \in BA \mid \lnot M(R_v)\}| \cdot d(A, B)\\
&\leq 0.
\end{align*}
The last step is because $|\{v \in BA \mid \lnot M(R_v)\}|$ is the number of unmatched vertices on the right in the smaller matching within $BA$, which is at most $|BA| - (m - 2|AB|) = |AB|$.
\end{proof}

\subsection{Existence Conjecture}
\label{sec:conjecture}
Now we present a combinatorial conjecture which, if true, will guarantee $\textsc{MatchingUncovered}$ has distortion of $3$.
\begin{conjecture}
For any $\mathcal{C} = \{X_1, X_2, \ldots, X_n\}$ and $\mathcal{V}$, for any voting profile, at least one of the bipartite graphs $G(X_1, X_2), G(X_2, X_3), G(X_3, X_4), \ldots, G(X_{n-1}, X_n)$ and $G(X_n, X_1)$ has a perfect matching.
\label{conjecture}
\end{conjecture}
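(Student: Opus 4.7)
The plan is to argue by contradiction: suppose that for every index $i \in \{1, \ldots, n\}$ (taken mod $n$), the bipartite graph $G(X_i, X_{i+1})$ admits no perfect matching. First I would apply Hall's theorem to each such graph to extract a deficient left set $S_i \subseteq \mathcal{V}$ with $|N_i(S_i)| < |S_i|$, and set $T_i := \mathcal{V} \setminus N_i(S_i)$, so that $|S_i| + |T_i| > m$ and no edges of $G(X_i, X_{i+1})$ run between $S_i$ on the left and $T_i$ on the right.

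Next I would unpack the non-edge condition $P_v(X_{i+1}) \cap Q_{v'}(X_i) = \varnothing$ for $v \in S_i$, $v' \in T_i$. Taking the candidate $X_{i+1}$ (which always lies in $P_v(X_{i+1})$) forces $X_{i+1} \succ_{v'} X_i$, so $T_i \subseteq X_{i+1}X_i$; symmetrically, taking $X_i$ (which always lies in $Q_{v'}(X_i)$) forces $S_i \subseteq X_{i+1}X_i$, using that $T_i$ must be nonempty. Combined with $|S_i| + |T_i| > m$ this already yields $|X_{i+1}X_i| > m/2$ for every $i$, so the pairwise tournament must contain the directed Hamiltonian cycle $X_1 \to X_n \to X_{n-1} \to \cdots \to X_2 \to X_1$. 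The more refined content of the non-edge condition is that the upper set $\bigcup_{v \in S_i} P_v(X_{i+1})$ lies strictly above $X_i$ in the preference of every $v' \in T_i$, a statement that should be the workhorse of the argument.

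The last step, and the main obstacle, is turning these $n$ local constraints into a single cyclic contradiction. The naive pigeonhole $\sum_i |S_i \cap T_i| \geq \sum_i (|S_i| + |T_i|) - nm > 0$ produces only one voter $v^*$ and one index $i^*$ with $X_{i^*+1} \succ_{v^*} X_{i^*}$, which falls far short of a length-$n$ chain of strict preferences on a single voter that would violate transitivity. The hard part will be assembling such a chain out of constraints involving different voter pairs at different indices; plausible ingredients are choosing each $S_i$ canonically (for instance as an inclusion-minimal Hall-violator) to force simultaneous membership across many $i$, or iteratively propagating the upper-set constraint across adjacent indices to derive $X_{i+2} \succ X_i$, then $X_{i+3} \succ X_i$, and so on until cyclic closure is forced. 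That the authors verify the conjecture only computationally for $n + m \leq 14$ reflects exactly this difficulty: the local structure is transparent, but the global combinatorial closure appears to need a genuinely new idea.
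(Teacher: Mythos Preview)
The statement you are attempting to prove is labeled \emph{Conjecture} in the paper, and it is not proved there. The authors establish it only in the special cases $|\mathcal{C}|\le 3$ (Theorem~\ref{lem:c_leq_3}) and $|\mathcal{V}|\le 3$ (Theorem~\ref{lem:v_leq_3}), both by short ad hoc arguments exploiting the small number of preference types, and otherwise rely on computer search (Theorem~\ref{theorem:cv_small}). There is therefore no ``paper's own proof'' to compare against; the statement is genuinely open in the paper.

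Your proposal is not a proof, and to your credit you say so. The local analysis is correct: from a Hall violator in $G(X_i,X_{i+1})$ one gets $S_i,T_i\subseteq\mathcal{V}$ with $|S_i|+|T_i|>m$ and the emptiness of $P_v(X_{i+1})\cap Q_{v'}(X_i)$ for $v\in S_i$, $v'\in T_i$ does force $S_i\cup T_i\subseteq X_{i+1}X_i$, hence $|X_{i+1}X_i|>m/2$. But, exactly as you note, this only produces $n$ separate majority constraints on pairwise comparisons, which are perfectly consistent with a cyclic tournament and do not by themselves yield a contradiction. Your suggested closures --- choosing inclusion-minimal $S_i$, or propagating the upper-set constraint from $X_{i+1}$ to $X_{i+2}$ --- are reasonable heuristics but neither is developed into an argument, and the paper offers no evidence that either can be made to work.

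In short: the gap you identify at the end is the whole difficulty, and the paper does not resolve it either. Your write-up is an accurate description of why the conjecture is hard, not a proof of it.
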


Conjecture~\ref{conjecture} immediately implies that \textsc{MatchingUncovered} has distortion of 3: If the matching uncovered set is empty, we can start at any vertex $A$, find the graph $G(A, B)$ without a perfect matching, then go to $B$ and repeat until we find a cycle $X_1, X_2, \ldots, X_t$. We delete any candidate not on the cycle from the voting profile and maintain relative orders for other candidates. Since $G(X_k, X_{k+1})$ does not have a perfect matching before the deletions, $G(X_k, X_{k+1})$ constructed after the deletions does not have a perfect matching either. However, Conjecture~\ref{conjecture} denies the existence of such a cycle. ($\mathcal{C}$ in Conjecture~\ref{conjecture} is the set of candidates on the cycle.)

Though we are not able to prove or disprove Conjecture~\ref{conjecture}, we can show it when the number of candidates or the number of voters is at most $3$.
\begin{theorem}
When $|\mathcal{C}| \leq 3$, Conjecture~\ref{conjecture} holds.
\label{lem:c_leq_3}
\end{theorem}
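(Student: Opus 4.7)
The plan is to handle $|\mathcal C|\in\{1,2,3\}$ in turn, reducing each case to a concrete inequality via Hall's theorem. For $n=1$ the graph $G(X_1,X_1)$ is complete bipartite because $X_1\in P_v(X_1)\cap Q_{v'}(X_1)$ for every pair of voters. For $n=2$, in $G(A,B)$ every left voter in $AB$ is adjacent to every right vertex (via the shared candidate $A$) and every left voter in $BA$ is adjacent only to right voters in $AB$ (via $B$); Hall's theorem then gives that $G(A,B)$ has a perfect matching iff $|BA|\le|AB|$, and one of $|AB|\le|BA|$ or $|BA|\le|AB|$ must hold.

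The main case is $n=3$. The first step will be to derive a clean Hall-type characterization: writing $Z$ for the third candidate, I will show
\[
G(X,Y)\text{ has a perfect matching}\iff |YZX|+\max\bigl(|YXZ|,\,|ZYX|\bigr)\le|XY|,
\]
where $|\sigma|$ counts voters of ordering $\sigma$. The reasoning is that left voters in $XY$ join every right vertex (via $X$) and left voters in $YX$ always join every right voter in $XY$ (via $Y$); the only additional edges come from left voters of type $ZYX$ reaching right voters of type $YXZ$ (via $Z$). Splitting a would-be Hall-violating subset of left voters according to whether it contains a $ZYX$-voter produces exactly the two inequalities inside the $\max$.

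Abbreviate $a=|ABC|$, $b=|ACB|$, $c=|BAC|$, $d=|BCA|$, $e=|CAB|$, $f=|CBA|$. The two cyclic orderings of $\{A,B,C\}$ are swapped by the $B\leftrightarrow C$ relabeling (which correspondingly transforms the voting profile); since the conjecture must be proved for every profile, it suffices to treat the ordering $(A,B,C)$. The three simultaneous matching-failures then read
\[
d+\max(c,f)>a+b+e,\quad e+\max(f,b)>a+c+d,\quad a+\max(b,c)>d+e+f.
\]
The cyclic relabeling $A\to B\to C\to A$ permutes these three inequalities cyclically while cyclically permuting $(b,c,f)$, so I may assume WLOG $f=\max(b,c,f)$. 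If additionally $c\ge b$, then $\max(c,f)=\max(f,b)=f$ and $\max(b,c)=c$, and adding the second and third inequalities collapses the $\max$ terms into $a+c+e+f>2d+a+c+e+f$, i.e., $0>2d$, contradicting $d\ge 0$. If instead $b\ge c$, summing the first and third inequalities analogously yields $0>2e$, equally impossible.

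The main obstacle will be spotting the correct pair of inequalities to add in each sub-case. After the cyclic-symmetry normalization, the right choice is forced by the requirement that all $\max$ terms cancel and the sum collapses to $0>2\cdot(\text{non-negative count})$. Obtaining the clean closed-form Hall characterization in the first step is what will make this cancellation transparent, and the two-fold symmetry reductions keep the case analysis down to two short sub-cases rather than six.
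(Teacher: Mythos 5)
Your proof is correct, and while its first half coincides with the paper's, the decisive second half takes a genuinely different route. The setup is the same in substance: your Hall-type characterization, that $G(X,Y)$ has a perfect matching iff $|YZX|+\max\bigl(|YXZ|,|ZYX|\bigr)\le|XY|$, becomes exactly the paper's system of failure conditions $2(a+c+x)+\min(y,z)<1$ (etc.) once you normalize by $m$ and eliminate $|YZX|$ --- in fact the paper asserts that system without derivation, and your edge analysis (left-$XY$ voters complete to the right via $X$, left-$YX$ voters joined to right-$XY$ via $Y$, plus only left-$ZYX$ to right-$YXZ$ via $Z$) is the correct justification for it; your $n\le 2$ cases also match the paper's. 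The divergence is in refuting the three simultaneous failures: the paper shifts the mass of the three cyclic voter types onto their reversed orders ($a\to z$, $b\to x$, $c\to y$ in its notation), claims all three left-hand sides weakly decrease, reduces to $a=b=c=0$, and finishes with a $\max$/$\min$ argument; you instead use the cyclic relabeling $A\to B\to C\to A$ to normalize $f=\max(b,c,f)$ and then add two of the three strict inequalities so that the $\max$ terms cancel, yielding $0>2d$ or $0>2e$. Your route is not only shorter but arguably more robust: the paper's monotonicity claim is not literally true as stated, since moving $b$'s weight to $x$ \emph{increases} the first left-hand side $2(a+c+x)+\min(y,z)$ by $2b$ (e.g.\ $a=c=0$, $b=0.1$, $x=y=z=0.3$ sends it from $0.9$ to $1.1$), and a short check shows no re-pairing of sources $\{a,b,c\}$ to destinations $\{x,y,z\}$ decreases all three expressions simultaneously; your direct two-inequality cancellation needs no such reduction step at all. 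The only cost of your approach is the explicit symmetry bookkeeping (the $B\leftrightarrow C$ swap for the cycle orientation, the cyclic WLOG, and two subcases), all of which you carry out correctly.
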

\begin{proof}
When $|\mathcal{C}| = 2$, a candidate $A$ is preferred to the other candidate $B$ by at least half of the voters. There is a perfect matching in $G(A, B)$. When $|\mathcal{C}| = 3$, assume a counterexample exists. Let $\mathcal{C} = \{A = X_1, B = X_2, C = X_3\}$. Let $a = \frac{|ABC|}{m}, b = \frac{|BCA|}{m}, c =\frac{|CAB|}{m}, x = \frac{|ACB|}{m}, y = \frac{|BAC|}{m}, z = \frac{|CBA|}{m}$. In order that none of $G(A, B), G(B, C), G(C, A)$ has a perfect matching, we need

\[
  \begin{cases}
    2(a + c + x) + \min(y, z) < 1\\
    2(b + a + y) + \min(z, x) < 1\\
    2(c + b + z) + \min(x, y) < 1
  \end{cases}.
\]
Notice that if we move the weight for $a$ ($b$, $c$ respectively) to $z$ ($x$, $y$ respectively), all three expressions on the left become weakly smaller. Thus we can assume $a = b = c = 0$ without loss of generality. Therefore we need
\[
  \begin{cases}
    2x + \min(y, z) < 1\\
    2y + \min(z, x) < 1\\
    2z + \min(x, y) < 1
  \end{cases}
\]
while $x + y + z = 1$. Without loss of generality, assume $x = \max(x, y, z)$. Then $2x + \min(y, z) = 2\max(x, y, z) + \min(x, y, z) \geq x + y + z = 1$. Thus the set of inequalities cannot be simultaneously satisfied.
\end{proof}

\begin{theorem}
When $|\mathcal{V}| \leq 3$, Conjecture~\ref{conjecture} holds.
\label{lem:v_leq_3}
\end{theorem}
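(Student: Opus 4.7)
The plan is to handle $|\mathcal{V}| \leq 3$ by cases on the number of voters, using in each case an explicit ``majority-implies-matching'' lemma together with Hall's theorem to rule out bad cycles $X_1 \to X_2 \to \cdots \to X_n \to X_1$ in which no $G(X_i, X_{i+1})$ has a perfect matching. For $|\mathcal{V}| = 1$ the argument is immediate: by transitivity of the unique voter's preference, some pair $(X_i, X_{i+1})$ must satisfy $X_i \succ_{v_1} X_{i+1}$, and then $X_i \in P_{v_1}(X_{i+1}) \cap Q_{v_1}(X_i)$ yields the single edge of $G(X_i, X_{i+1})$, hence a perfect matching.

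For $|\mathcal{V}| = 2$, I will first show that if some voter prefers $A$ to $B$ then $G(A, B)$ has a perfect matching: the voter $v$ preferring $A$ makes both $L_v$ and $R_v$ universal in $G(A, B)$ via the witnesses $A \in P_v(B) \cap Q_{v'}(A)$ and $B \in P_{v'}(B) \cap Q_v(A)$ for every $v'$, and any $2 \times 2$ bipartite graph with a universal vertex on each side admits a matching. A putative bad cycle would then require $|X_i X_{i+1}| = 0$ at every pair, forcing $X_1 \prec X_2 \prec \cdots \prec X_n \prec X_1$ for both voters, contradicting transitivity.

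For $|\mathcal{V}| = 3$, the core step is the analogous majority lemma: if $|AB| \geq 2$, then $G(A, B)$ has a perfect matching. Concretely, when $v_1, v_2$ prefer $A$ to $B$ (and $v_3$ is arbitrary), the matching $L_{v_1} \mapsto R_{v_3}$, $L_{v_2} \mapsto R_{v_2}$, $L_{v_3} \mapsto R_{v_1}$ is perfect, with $A$ witnessing the first two edges and $B$ witnessing the third. In a putative bad cycle this forces at most one voter to ``descend'' at each pair, while each voter must descend at least once by transitivity (otherwise that voter's preference would strictly increase around the cycle), so the total descent count satisfies $3 \leq \sum_v \delta_v \leq n$, giving $n \geq 3$. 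When $n = 3$, restricting every voter's preference to $\mathcal{C}' = \{X_1, X_2, X_3\}$ only removes edges from each $G(X_i, X_{i+1})$, so Theorem~\ref{lem:c_leq_3} produces a perfect matching in the restricted and hence the original graph.

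For $n \geq 4$ the majority lemma alone is not tight (e.g., the Condorcet profile has $|AB| = 1$ at every pair yet $G(A, B)$ admits a matching via a witness candidate $C$), so the plan is to extract the full Hall condition at each failing pair: when $|X_i X_{i+1}| = 1$, no candidate $C \notin \{X_i, X_{i+1}\}$ can simultaneously sit strictly above $X_{i+1}$ for one of the two $X_{i+1}$-preferring voters while strictly below $X_i$ for the other, and analogous but broader restrictions hold when $|X_i X_{i+1}| = 0$. Combining these local ``witness-exclusion'' constraints around the cycle with a case analysis on the possible descent distributions among the three voters should force the three preferences to be globally incompatible with being total orders. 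The main obstacle lies precisely in this $n \geq 4$ regime: the simple majority lemma must be supplemented by the finer Hall-theoretic analysis to rule out longer bad cycles.
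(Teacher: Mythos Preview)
Your handling of $|\mathcal{V}| \leq 2$ is correct and essentially identical to the paper's. Your majority lemma for $|\mathcal{V}| = 3$ (if $|AB| \geq 2$ then $G(A,B)$ has a perfect matching) is also correct and is the paper's first step. However, your proposal leaves a genuine gap at $n \geq 4$: you only sketch a Hall-theoretic ``witness-exclusion'' case analysis without carrying it out, and you explicitly flag it as the main obstacle. The paper closes this gap with a short constructive argument that avoids any global case analysis on descent patterns.

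The paper's argument runs as follows. First reduce to \emph{exactly} one descending voter per cyclic step: if all three voters prefer $X_{i+1}$ to $X_i$, then $P_v(X_{i+1}) \subseteq P_v(X_i)$ for every $v$, so $G(X_{i-1}, X_{i+1})$ is an edge-subgraph of $G(X_{i-1}, X_i)$ and also lacks a perfect matching; hence $X_i$ may be deleted from the cycle. After this reduction, the key idea is to \emph{name a specific witness candidate} rather than reason abstractly about Hall sets. Relabel so that $X_1$ is the top choice of some voter, say $v_3$. Then $v_1, v_2$ are the two who prefer $X_2$ to $X_1$. Let $k$ be the first index at which $v_3$ ascends ($X_{k+1} \succ_{v_3} X_k$); for every $j < k$ the unique descender at step $j$ is $v_3$, so both $v_1$ and $v_2$ satisfy $X_k \succ \cdots \succ X_1$. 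One checks $k < n$, since otherwise $v_1, v_2$ both prefer $X_n$ to $X_1$ and $G(X_n, X_1)$ would have a matching by your majority lemma. Now look at $G(X_k, X_{k+1})$: one of $v_1, v_2$ is the unique descender there and is universal on both sides, supplying two edges of the matching. The third edge comes from the witness $X_1$: the \emph{other} of $v_1, v_2$ has $X_1 \in Q_\cdot(X_k)$ (from the chain $X_k \succ \cdots \succ X_1$), and $v_3$ has $X_1 \in P_{v_3}(X_{k+1})$ (since $X_1$ is $v_3$'s top choice). This yields a perfect matching in $G(X_k, X_{k+1})$, a contradiction.

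So the idea you are missing is precisely this anchored witness: instead of collecting local Hall obstructions at every pair and trying to synthesize a global contradiction, fix one voter's favorite candidate and use it directly as the common element certifying the third matching edge at the first step where that voter ascends. This replaces your open-ended $n \geq 4$ case analysis with a two-line construction that works uniformly for all $n$, and also makes your separate appeal to Theorem~\ref{lem:c_leq_3} for $n = 3$ unnecessary.
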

\begin{proof}
When $|\mathcal{V}| \leq 2$, the first voter must prefer some $X_k$ to $X_{k+1}$ (or $X_n$ to $X_1$). There is a perfect matching in $G(X_k, X_{k+1})$ then. When $|\mathcal{V}| = 3$, we prove by contradiction and assume such a counterexample exists. Let $\mathcal{V} = \{v_1, v_2, v_3\}$. For each $i = 1, 2, \ldots, n$, at least $2$ voters prefer $X_{i + 1}$ to $X_i$. If all voters prefer $X_{i + 1}$ to $X_i$, we can remove $X_i$ in the cycle and there will not be a perfect matching in $G(X_{i-1}, X_{i+1})$. Therefore without loss of generality, for each $i = 1, 2, \ldots, n$, we assume exactly $2$ voters prefer $X_{i + 1}$ to $X_i$. Again without loss of generality, let $X_1$ be the top choice of $v_3$. Then $v_1$ and $v_2$ must prefer $X_2$ to $X_1$. Let $k$ be the smallest one such that $v_3$ prefers $X_{k+1}$ to $X_k$. Among candidates $X_1, X_2, \ldots, X_k$, the preference of $v_1$ and $v_2$ is $X_k \succ X_{k-1} \succ \cdots \succ X_1$ while the preference of $v_3$ is $X_1 \succ X_2 \succ \cdots \succ X_k$. This $k$ must be strictly less than $n$ for $G(X_n, X_1)$ not to have a perfect matching. Notice $G(X_k, X_{k+1})$ must have a perfect matching: there is a voter preferring $X_{k}$ to $X_{k+1}$ contributing $2$ matches. The other two voters contribute one match because of the common $X_1$ ($v_1$ or $v_2$ has $X_k \succ X_1$ and $v_3$ has $X_1 \succ X_{k+1}$).
\end{proof}

Besides the cases where $\min(|\mathcal{C}|, |\mathcal{V}|) \leq 3$, we performed computer-assisted search to enumerate all possible voting profiles for small $(|\mathcal{C}|, |\mathcal{V}|)$ pairs, and we were not able to find any counterexample. Combining with Lemma~\ref{lem:c_leq_3} and Lemma~\ref{lem:v_leq_3}, we show the following result:
\begin{theorem}
When $(|\mathcal{C}|, |\mathcal{V}|) \in \mathcal{E}$, Conjecture~\ref{conjecture} holds. Here
\begin{align*}
\mathcal{E} = &([3] \times \mathbb{N}) \cup ([4] \times [20]) \cup ([5] \times [11]) \cup ([6] \times [8]) \cup ([7] \times [7])\\
&\cup ([8] \times [6]) \cup ([9] \times [5]) \cup ([10] \times [4]) \cup (\mathbb{N} \times [3]).
\end{align*}
\label{theorem:cv_small}
\end{theorem}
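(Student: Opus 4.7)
The plan has two components. For the infinite strips $[3]\times\mathbb{N}$ and $\mathbb{N}\times[3]$, we already have everything we need: Theorem~\ref{lem:c_leq_3} settles the case $|\mathcal{C}|\le 3$ for every value of $|\mathcal{V}|$, and Theorem~\ref{lem:v_leq_3} settles the case $|\mathcal{V}|\le 3$ for every value of $|\mathcal{C}|$. So the only work remaining is to dispose of each of the finitely many rectangular pieces $[4]\times[20]$, $[5]\times[11]$, $[6]\times[8]$, $[7]\times[7]$, $[8]\times[6]$, $[9]\times[5]$, $[10]\times[4]$ by an exhaustive computer-assisted search.

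For the search, fix one such pair $(n,m)$. Since voters are anonymous, a voting profile is specified by a multiset of $m$ permutations drawn from $\mathcal{O}_n$, and the verifier need only check, for each enumerated multiset, whether at least one of the $n$ bipartite graphs $G(X_1,X_2),G(X_2,X_3),\ldots,G(X_n,X_1)$ admits a perfect matching. A single matching test runs in polynomial time via any standard bipartite matching algorithm, and there are only $n$ such tests per profile, so the per-profile work is cheap; the only real question is whether the enumeration fits within a practical compute budget.

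To keep the enumeration tractable I would exploit two symmetries. First, Conjecture~\ref{conjecture} is invariant under the cyclic relabeling $X_i\mapsto X_{i+1}$ (together with the induced relabeling of each voter's preference), so one may quotient the profile space by the resulting $\mathbb{Z}/n$ action. Second, the multiset formulation already quotients out the $S_m$ action on voters. This brings the number of profiles to inspect down to roughly $\binom{n!+m-1}{m}/n$, which is within reach for every $(n,m)\in\mathcal{E}$. Additional pruning helps: one can enumerate profiles depth-first, maintaining for each candidate-pair $(X_i,X_{i+1})$ a current maximum matching, extending it incrementally whenever a new voter is appended, and backtracking as soon as some $G(X_i,X_{i+1})$ is certified to have a perfect matching.

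The main obstacle is simply engineering the search to finish. The extremal rectangles, notably $[4]\times[20]$ and $[10]\times[4]$, sit near the edge of what such an enumerator can handle and essentially determine the boundary of $\mathcal{E}$; pushing further would require either sharper symmetry reductions, smarter pruning, or genuine mathematical arguments beyond brute force. Assuming the search completes without producing a counterexample on each of the finite rectangles, combining its output with Theorems~\ref{lem:c_leq_3} and~\ref{lem:v_leq_3} covers every pair in $\mathcal{E}$ and establishes Theorem~\ref{theorem:cv_small}.
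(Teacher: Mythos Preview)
Your proposal is correct and follows essentially the same approach as the paper: the paper likewise covers the two infinite strips by Theorems~\ref{lem:c_leq_3} and~\ref{lem:v_leq_3} and disposes of the remaining finitely many $(n,m)$ pairs by an exhaustive computer-assisted enumeration of voting profiles, checking each for a perfect matching along the cycle. You supply considerably more implementation detail (anonymity, the $\mathbb{Z}/n$ cyclic symmetry, incremental matching with backtracking) than the paper itself, which simply reports that the search was performed and found no counterexample.
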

In particular, when $|\mathcal{C}| + |\mathcal{V}| \leq 14$, Conjecture~\ref{conjecture} holds. In other words, if the circumference (size of the largest cycle) of the tournament graph plus the number of voters is at most $14$, \textsc{MatchingUncovered} achieves distortion of $3$.

\subsection{Cyclically Symmetric Weighted Tournament Graphs}
\label{sec:symmetric}
In this section, we prove that for any instance inducing a cyclically symmetric weighted tournament graph, the distortion of selecting any candidate is at most $3$. Recall that by cyclic symmetry, we mean there exists a cyclic permutation $\tau$ on the vertices, so that any edge $(u, v)$ has the same weight as $(\tau(u), \tau(v))$.
\begin{theorem}
For any preference profile that induces a cyclically symmetric weighted tournament graph, every social choice rule achieves distortion of at most $3$.
\label{theorem:symmetric}
\end{theorem}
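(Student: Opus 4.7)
The plan is to reduce to Theorem~\ref{theorem:uncovered_set_3}: if every candidate lies in the matching uncovered set, then every social choice rule selects a member of that set and so achieves distortion at most $3$. It therefore suffices to show that every candidate belongs to the matching uncovered set whenever the weighted tournament graph is cyclically symmetric.

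I would apply Lemma~\ref{lem:reduction_2}, whose purpose is to give a sufficient condition for matching-uncovered-set membership expressible purely in terms of the weighted tournament graph. Setting $w_k := |A\tau^k(A)|/m$, which by cyclic symmetry does not depend on $A$, the hypothesis of Lemma~\ref{lem:reduction_2} applied at $A$ is literally the same set of inequalities in $w_1, \ldots, w_{n-1}$ as the hypothesis applied at $\tau(A)$. Hence the sufficient condition either holds at every candidate or at none; it only remains to verify it at one. I would do so by exploiting the cyclic identities $w_k + w_{n-k} = 1$: the candidates $\tau^i(A)$ for $i \neq 0, k$ serve as bridging candidates for the pair $(A, \tau^k(A))$ in the matching condition, and the offsets paired by $k \leftrightarrow n-k$ ensure that the aggregate bridging capacity suffices regardless of which direction of each pair holds the majority. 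Combined with the symmetry argument, every candidate then belongs to the matching uncovered set, and Theorem~\ref{theorem:uncovered_set_3} delivers the distortion bound of $3$ uniformly over all social choice rules.

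The main obstacle is that cyclic symmetry of the weighted tournament graph does not imply cyclic symmetry of the underlying voting profile: individual voters need not be permuted by $\tau$. Hall-type matching conditions probe the voter-level structure of preferences, whereas cyclic symmetry only constrains aggregate pairwise victories. The substantive work of the proof therefore sits in Lemma~\ref{lem:reduction_2}, whose aggregate formulation must decouple the matching condition from the finer profile structure so that the cyclic weight relations alone suffice to verify its hypothesis.
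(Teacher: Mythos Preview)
Your high-level plan is correct and matches the paper exactly: reduce to Theorem~\ref{theorem:uncovered_set_3} by showing every candidate lies in the matching uncovered set, and verify the latter via the contrapositive of Lemma~\ref{lem:reduction_2}. Your observation that cyclic symmetry makes the Lemma~\ref{lem:reduction_2} condition for the pair $(A,\tau^k(A))$ depend only on $k$ (not on $A$) is also the paper's observation, and is correct.

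Where your proposal falls short is the actual verification that, for each offset $k$, the open interval $(w(A,B),w(B,A))$ with $B=\tau^k(A)$ is covered by $\bigcup_{C\ne A,B}[w(C,A),w(C,B)]$. You gesture at ``cyclic identities $w_k+w_{n-k}=1$'' and ``offsets paired by $k\leftrightarrow n-k$'', but this pairing is not the mechanism that makes the covering work, and it is not clear how to turn it into one. The paper's argument is different and quite specific: it restricts attention to the sub-cycle $C=\tau^{jk}(A)$ for $j=2,\ldots,h-1$ (where $h$ is the order of $\tau^k$), uses the cyclic-symmetry identity
\[
w\bigl(\tau^{jk}(A),\tau^k(A)\bigr)=w\bigl(\tau^{(j-1)k}(A),A\bigr),
\]
and observes that the resulting intervals $[\,w(\tau^{jk}(A),A),\,w(\tau^{(j-1)k}(A),A)\,]$ telescope to cover $[w(A,B),w(B,A)]$. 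This chaining along powers of $\tau^k$ is the substantive idea you are missing; without it, the Lemma~\ref{lem:reduction_2} hypothesis has not been verified, and the proof is incomplete.
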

Our proof crucially relies on the notion of matching uncovered sets. We first introduce a simpler test for the existence of perfect matchings in $G(A, B)$.
\begin{lemma}
Let $w(X, Y) = \frac{|XY|}{m}$ be the weight of edge $XY$ in the weighted tournament graph. If there is no perfect matching in $G(A, B)$, then in the tournament graph,
\[
(w(A, B), w(B, A)) - \left(\bigcup_{C \in (\mathcal{C} - \{A, B\})} [w(C, A), w(C, B)]\right)
\]
must be nonempty. Here $[a, b]$ denotes the closed interval $\{x \in \mathbb{R} \mid a \leq x \leq b\}$ and $(a, b)$ denotes the open interval.
\label{lem:reduction_2}
\end{lemma}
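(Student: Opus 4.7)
The plan is to take the contrapositive and apply Hall's theorem to a failure of matchability in $G(A,B)$, then read off a suitable $p$ directly from the sizes of a Hall-violating set $S$ and its complement of non-neighbors $T$.

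Assume no perfect matching in $G(A,B)$. By Hall's theorem there is some $S \subseteq \mathcal{V}$ on the left with $|N(S)| < |S|$; set $T := \mathcal{V} \setminus N(S)$, so $|S| + |T| > m$ and no edge of $G(A,B)$ has one endpoint in $S$ and the other in $T$.

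The first substantive step is to classify edges of $G(A,B)$ by the candidate $X$ witnessing $X \in P_v(B) \cap Q_{v'}(A)$. Since $A$ is in every $Q_{v'}(A)$ and $B$ is in every $P_v(B)$, candidate $A$ contributes exactly the edges $AB \times \mathcal{V}$ and candidate $B$ contributes exactly $\mathcal{V} \times AB$; for each $C \in \mathcal{C} \setminus \{A,B\}$, candidate $C$ contributes exactly $CB \times AC$. The absence of edges between $S$ and $T$ therefore forces $S, T \subseteq BA$ (otherwise $A$ or $B$ witnesses an edge between them), and for every $C \in \mathcal{C} \setminus \{A,B\}$,
\[
S \cap CB = \varnothing \quad \text{or} \quad T \cap AC = \varnothing.
\]

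Now write $s = |S|/m$ and $t = |T|/m$, so $s + t > 1$. Because $S, T \subseteq BA$, we have $s, t \leq w(B,A)$, hence $1 - s \geq w(A,B)$, so the open interval $(1-s, t)$ is nonempty and contained in $(w(A,B), w(B,A))$. For every $C \in \mathcal{C} \setminus \{A,B\}$, the alternative $S \cap CB = \varnothing$ gives $S \subseteq BC$ and therefore $w(C,B) \leq 1 - s$, while the alternative $T \cap AC = \varnothing$ gives $T \subseteq CA$ and therefore $w(C,A) \geq t$. In either case every $p \in (1-s, t)$ satisfies $p > w(C,B)$ or $p < w(C,A)$, so $p \notin [w(C,A), w(C,B)]$, and any such $p$ witnesses the lemma.

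I do not anticipate a real technical obstacle here. The entire argument is a dictionary between the bipartite structure of $G(A,B)$ and the pairwise candidate weights, and the only delicate point, that Hall supplies weak containments while the lemma statement uses an open interval, is handled automatically by restricting $p$ to the open subinterval $(1-s, t)$ rather than its endpoints.
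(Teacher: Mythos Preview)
Your proof is correct and follows essentially the same approach as the paper: both extract a Hall/K\"onig witness (the paper phrases it as an independent set of size at least $m+1$, you as sets $S$ on the left and $T$ on the right with no edges and $|S|+|T|>m$) and then read off the weight constraints for $A$, $B$, and each $C$ from the absence of edges. Your use of the open interval $(1-s,t)$ is slightly cleaner than the paper's point-plus-$\varepsilon$ formulation, but the underlying argument is identical.
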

\begin{proof}
If there is no perfect matching in $G(A, B)$, there must be an independent set of at least $m + 1$ vertices. Therefore, for sufficiently small $\varepsilon > 0$, there exists $x \in (0, 1)$, so that at least $x + \varepsilon$ portion of vertices on the right are disjoint with at least $1 - x + \varepsilon$ portion of vertices on the left. As a result, we can select $Q_v(A)$'s for at least $x + \varepsilon$ portion of $v \in \mathcal{V}$ and $P_{v'}(B)$'s for at least $1 - x + \varepsilon$ portion of $v' \in \mathcal{V}$, and make each $Q_v(A)$ disjoint with each $P_{v'}(B)$.

This $x$ must be in the interval $(w(A, B), w(B, A))$, otherwise the selected $Q_v(A)$'s and $P_{v'}(B)$'s will share a common $A$ or $B$. For any other candidate $C$, either $C$ does not appear in the selected $Q_v(A)$'s, or it does not appear in the selected $P_{v'}(B)$'s. It implies either $w(A, C) \leq 1 - x - \varepsilon$, or $w(C, B) \leq x - \varepsilon$. (Otherwise, there are too many $Q_v(A)$'s or $P_{v'}(B)$'s containing $C$, which makes the selection impossible.) Thus, $x < W(C, A)$ or $x > W(C, B)$.
\end{proof}

\begin{lemma}
For any preference profile that induces a cyclically symmetric weighted tournament graph, every candidate is in the matching uncovered set.
\label{lem:symmetric_uncovered}
\end{lemma}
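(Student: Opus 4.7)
The plan is to invoke Lemma~\ref{lem:reduction_2} and then exploit the cyclic symmetry by tracing an additive orbit in $\mathbb{Z}/n\mathbb{Z}$. Fix any candidate $A$ and any $B \neq A$; relabeling via the cyclic permutation $\tau$, I may assume the candidates are $X_0, X_1, \ldots, X_{n-1}$ with $A = X_0$, $B = X_d$, so that the weight $w(X_i, X_j)$ depends only on the residue $(j - i) \bmod n$. Writing $f(k)$ for this common weight, I have $f(k) + f(n - k) = 1$, and Lemma~\ref{lem:reduction_2} reduces my task to showing that the open interval $(f(d), f(n - d))$ is covered by $\bigcup_{k \neq 0, d} [f(n - k), f((d - k) \bmod n)]$.

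I will dispose of $f(d) \geq 1/2$ at once, since the target interval is then empty. So set $s := n - d$ and assume $f(d) < 1/2 < f(s)$. Arguing by contradiction, suppose some $x \in (f(d), f(s))$ escapes every interval. After substituting $i = n - k$, the non-coverage hypothesis reads: for every $i \in \{1, \ldots, n - 1\} \setminus \{s\}$, either $f(i) > x$ or $f((i + d) \bmod n) < x$ (I take strict inequalities via the $\varepsilon$-slack from the proof of Lemma~\ref{lem:reduction_2}).

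The main step is a walk along the additive orbit of $d$ modulo $n$. With $g = \gcd(d, n)$, define $v_j := f(jd \bmod n)$ for $j = 1, 2, \ldots, n/g - 1$. The two endpoints are pinned by the two-sided assumption on $x$: $v_1 = f(d) < x$ while $v_{n/g - 1} = f(-d \bmod n) = f(s) > x$. For every intermediate $j \in \{1, \ldots, n/g - 2\}$, the residue $jd \bmod n$ lies in $\{1, \ldots, n - 1\} \setminus \{s\}$, so the non-coverage hypothesis applies at $i = jd \bmod n$ and yields $v_j > x$ or $v_{j+1} < x$. A short induction starting from $v_1 < x$ then propagates $v_j < x$ the whole way along the orbit, and I conclude $v_{n/g - 1} < x$, contradicting $v_{n/g - 1} = f(s) > x$.

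The one step that will need care is the index bookkeeping: verifying that every intermediate $jd \bmod n$ avoids both $0$ and $s$ (these residues would force $n/g \mid j$ and $n/g \mid j + 1$ respectively, both excluded when $1 \leq j \leq n/g - 2$), and confirming that the degenerate orbit length $n/g = 2$ forces $d = n/2$ and hence $f(d) = 1/2$, landing back in the already-handled empty-interval case. Once that bookkeeping is in place, the inductive walk is routine, and because $A$ and $B$ were chosen arbitrarily, every candidate will lie in the matching uncovered set.
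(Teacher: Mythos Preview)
Your proposal is correct and is essentially the same argument as the paper's. Both proofs invoke Lemma~\ref{lem:reduction_2} and then exploit the orbit $\{jd \bmod n\}$ of the cyclic shift to cover the target interval; the paper chains the intervals $\bigl[w(\tau^{jk}(A),A),\,w(\tau^{(j-1)k}(A),A)\bigr]$ directly to obtain $[w(A,B),w(B,A)]$, while your contradiction-plus-induction walk along $v_j = f(jd \bmod n)$ is the contrapositive of that same chaining. The case splits (your $f(d)\ge 1/2$ and $n/g=2$ versus the paper's $h=2$) are equivalent degenerate cases, and your remark about ``$\varepsilon$-slack'' is unnecessary since non-membership in the closed intervals of Lemma~\ref{lem:reduction_2} already yields the strict inequalities you need.
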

\begin{proof}
We only need to show every $G(A, B)$ has a perfect matching. We will use Lemma~\ref{lem:reduction_2} and show
\[
(w(A, B), w(B, A)) - \left(\bigcup_{C \in (\mathcal{C} - \{A, B\})} [w(C, A), w(C, B)]\right)
\]
is empty. Let $\tau$ be the cyclic permutation that shows the cyclically symmetric structure of the weighted tournament graph. Let $B = \tau^k(A)$ and $g$ be the greatest common divisor of $k$ and $n$. Let $h = n/g$ be the minimum positive integer that $\tau^{kh}$ is the identity permutation. If $h = 2$, then $w(A, B) = w(B, A)$ so $(w(A, B), w(B, A)) = \varnothing$. Otherwise,
\begin{align*}
\left(\bigcup_{C \in (\mathcal{C} - \{A, B\})} [w(C, A), w(C, B)]\right) &\supseteq \bigcup_{j = 2}^{h-1} \left[w\left(\tau^{jk}(A), A\right), w\left(\tau^{jk}(A), \tau^{k}(A)\right)\right]\\
&= \bigcup_{j = 2}^{h-1} \left[w\left(\tau^{jk}(A), A\right), w\left(\tau^{(j - 1)k}(A), A\right)\right]\\
&\supseteq \left[w\left(\tau^{(h - 1)k}(A), A\right), w\left(\tau^{(2 - 1)k}(A), A\right)\right]\\
&= \left[w\left(\tau^{hk}(A), \tau^{k}(A)\right), w\left(\tau^{k}(A), A\right)\right]\\
&= \left[w\left(A, B\right), w\left(B, A\right)\right].
\end{align*}
Thus $(w(A, B), w(B, A)) - \left(\bigcup_{C \in (\mathcal{C} - \{A, B\})} [w(C, A), w(C, B)]\right) = \varnothing$ so $G(A, B)$ has a perfect matching by Lemma~\ref{lem:reduction_2}.
\end{proof}

\begin{proof}[Proof of Theorem~\ref{theorem:symmetric}]
By Lemma~\ref{lem:symmetric_uncovered} and Theorem~\ref{theorem:uncovered_set_3}, if the weighted tournament graph is symmetric, every candidate is in the matching uncovered set, so selecting any candidate gives distortion of $3$.
\end{proof}

In \cite{anshelevich2018approximating}, the authors show the class of (unweighted) tournament rules has a distortion lower bound of $5$. They construct an instance with a cyclically symmetric (unweighted) tournament graph where selecting one of the candidates has distortion of $5$. The class of tournament rules cannot always avoid selecting the bad candidate so a lower bound of $5$ is established. Theorem~\ref{theorem:symmetric} shows that the exact same technique with cyclically symmetric tournament graphs cannot be applied to establish a non-trivial lower bound for the class of weighted tournament rules. It remains an open question that whether weighted tournament rules are strictly weaker than $\textsc{OptimalLP}$ in terms of distortion.

We note that, in Conjecture~\ref{conjecture}, we cannot replace the condition that $G(X_k, X_{k+1})$ has no perfect matching by that in Lemma~\ref{lem:reduction_2}. Otherwise the conjecture is false by the following example.
\begin{example}
Consider the weighted tournament graph shown in Figure~\ref{fig:counterexample_relaxation_2}.
\begin{figure}[H]
\centering
\begin{tikzpicture}[scale=0.3]
\draw [thick,->] (0, 0) to (0,10*0.87);
\draw [thick,->] (0, 10) to (10*0.87,10);
\draw [thick,->] (10, 10) to (10,1.3);
\draw [thick,->] (10, 0) to (1.3,0);

\draw [thick,->] (0,0) to (10*0.905,10*0.905);
\draw [thick,->] (0,10) to (10*0.905,0.95);

\draw [fill=white](0,0) circle [radius=1.3];
\draw [fill=white](0,10) circle [radius=1.3];
\draw [fill=white](10,10) circle [radius=1.3];
\draw [fill=white](10,0) circle [radius=1.3];

\node at (0,0) {\small $A$};
\node at (0,10) {\small $B$};
\node at (10,10) {\small $C$};
\node at (10,0) {\small $D$};

\node at (-1,5) {\small $0.3$};
\node at (5,10.8) {\small $0.35$};
\node at (11.2,5) {\small $0.35$};
\node at (5,-0.8) {\small $0.3$};

\node at (2.5,6.2) {\small $0.45$};
\node at (7.5,6.2) {\small $0.5$};
\end{tikzpicture}
\caption{A Counterexample for an Oversimplification with Lemma~\ref{lem:reduction_2}}
\label{fig:counterexample_relaxation_2}
\end{figure}
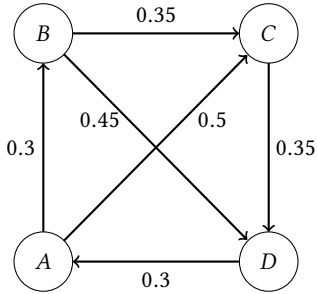
Only one edge between each pair of vertices is drawn for visibility, and the weight of the other one is simply $1$ minus that of the drawn one. The graph is realizable by the following voting profile:
\begin{eqnarray*}
&i = 1, 2, \ldots, 35: &B \succ_i A \succ_i D \succ_i C\\
&i = 36, 37, \ldots, 45: &C \succ_i B \succ_i A \succ_i D\\
&i = 46, 47, \ldots, 60: &D \succ_i C \succ_i B \succ_i A\\
&i = 61, 62, \ldots, 70: &C \succ_i D \succ_i B \succ_i A\\
&i = 71, 72, \ldots, 85: &A \succ_i D \succ_i C \succ_i B\\
&i = 86, 87, \ldots, 95: &C \succ_i A \succ_i D \succ_i B\\
&i = 96, 97, \ldots, 100: &C \succ_i D \succ_i A \succ_i B
\end{eqnarray*}
There are perfect matchings in $G(C, D)$ and $G(D, A)$, but
\begin{align*}
(w(A, B)&, w(B, A)) - [w(C, A), w(C, B)] - [w(D, A), w(D, B)]\\ &= (0.3, 0.7) - [0.5, 0.65] - [0.3, 0.55] \neq \varnothing,\\
(w(B, C)&, w(C, B)) - [w(D, B), w(D, C)] - [w(A, B), w(A, C)]\\ &= (0.35, 0.65) - [0.55, 0.65] - [0.3, 0.5] \neq \varnothing,\\
(w(C, D)&, w(D, C)) - [w(A, C), w(A, D)] - [w(B, C), w(B, D)]\\ &= (0.35, 0.65) - [0.5, 0.7] - [0.35, 0.45] \neq \varnothing,\\
(w(D, A)&, w(A, D)) - [w(B, D), w(B, A)] - [w(C, D), w(C, A)]\\ &= (0.3, 0.7) - [0.45, 0.7] - [0.35, 0.5] \neq \varnothing.
\end{align*}
\label{ex:counterexample_relaxation_2}
\end{example}

% !TEX root = main.tex
\section{Conclusions and Future Directions}
In this paper, we propose a deterministic social choice rule $\textsc{WeightedUncovered}$ which improves the best distortion upper bound from $5$ to $2 + \sqrt{5}$. We also show that this class of rules cannot improve the distortion beyond $2 + \sqrt{5}$ or improve the fairness ratio of \textsc{Copeland}. \textsc{MatchingUncovered}, on the other hand, may achieve better distortion as well as fairness ratio. The open question remains:
\begin{center}
What is the optimal distortion for the deterministic social choice problem?
\end{center}
We know $\textsc{OptimalLP}$ is a polynomial-time mechanism achieving optimal distortion, but rules of simpler forms may be more desirable in practice:
\begin{center}
Does $\textsc{OptimalLP}$ have strictly better distortion than $\textsc{WeightedUncovered}$, $\textsc{MatchingUncovered}$, and the class of weighted tournament rules?
\end{center}

\begin{acks}
Kamesh Munagala is supported by NSF grants CCF-1408784, CCF-1637397, and IIS-1447554; and research awards from Adobe and Facebook. Kangning Wang is supported by NSF grants IIS-1447554 and CCF-1637397.
\end{acks}

% Bibliography
\bibliographystyle{ACM-Reference-Format}
\bibliography{ref}

\end{document}